\title{Simultaneous communication in noisy channels}
\author{
Amit Weinstein\thanks{
Blavatnik School of Computer Science, Tel Aviv University, Tel Aviv 69978,
Israel. Email: amitw@tau.ac.il. Research supported in part by an 
ERC advanced grant.}
}
\date{}
\newtheorem{theo}{Theorem}
\newtheorem{prop}[theo]{Proposition}
\newtheorem{lemma}[theo]{Lemma}
\newtheorem{coro}[theo]{Corollary}
\newtheorem{definition}{Definition}
\newcommand{\FF}{{\cal F}}
\newcommand{\GG}{{\cal G}}
\def \NN {\mathbb N}
\begin{document}
\maketitle
\begin{abstract}
A sender wishes to broadcast a message of length $n$ over an
alphabet to $r$ users, where each user $i$, $1 \leq i \leq r$ should
be able to receive one of $m_i$ possible messages. The
 broadcast channel
has noise for each of the users (possibly
different noise for different users), who cannot distinguish between
some pairs of letters. The vector $(m_1, m_2, \ldots, m_r)_{(n)}$ is
said to be feasible if length $n$ encoding and decoding schemes
exist enabling every user to decode his message. A rate vector
$(R_1, R_2, \ldots, R_r)$ is feasible if there exists a sequence of
feasible vectors $(m_1, m_2,\ldots, m_r)_{(n)}$ such that $R_i = \lim_{n \mapsto \infty} \frac {\log_2 m_i} {n},
\mbox{for all } i$.

We determine the feasible rate vectors  for several different scenarios
and investigate some of their properties. An
interesting case discussed is when one user can only distinguish
between all the letters in a subset of the alphabet. Tight restrictions on the feasible rate vectors for some specific noise
types for the other users are provided. The simplest non-trivial
cases of two users and alphabet of size three are fully
characterized. To this end a more general previously known result, to
which we sketch an alternative proof, is
used.

This problem generalizes the study of the Shannon capacity of a graph, by considering more than a single user.

\end{abstract}

\section{Introduction}
A sender  has to transmit messages to $r$-users, where the user number $i$ should  be able to receive any
one of $m_i$ messages. To this end, the sender broadcasts a message of length $n$ over an alphabet
$\Sigma$ of size $k$. Each user $i$ has a confusion graph $G_i$ on the set of letters of $\Sigma$,
where two letters $a,b \in \Sigma$ are connected if and only if user $i$ cannot distinguish between
$a$ and $b$. The sender and users can agree on a  (deterministic) coding scheme.  For each possible
values $a_i$ of the messages, $1 \leq a_i \leq m_i$, the scheme should enable the sender to
transmit a string of length $n$ over $\Sigma$ so that each user $i$ will be able to recover $a_i$.
The vector of a scheme for length $n$ messages is $(m_1,m_2, \ldots ,m_r)$. The rate vector of a
sequence of schemes is the limit
$$
\lim_{n \mapsto \infty} (\frac{\log_2 m_1}{n}, \frac{\log_2 m_2}{n}, \ldots ,\frac{\log_2 m_r}{n}),
$$
assuming the limit exists for this sequence. Our objective is to study which vectors  and which
rate vectors are feasible for a given set of confusion  graphs  $G_i$. This seems to be difficult
even for relatively small cases, and reveals some interesting phenomena. Note that this problem
generalizes the problem of computing the \emph{Shannon capacity} of a graph which was first
considered by Shannon in \cite{S}. In the case of a single user (i.e.  a single confusion graph
$G$), the maximum feasible rate is precisely $\log_2 c(G)$ where $c(G)$ denotes the Shannon
capacity of $G$.

Investigating the feasible rate vectors for a given set of confusion graphs raises another interesting
question. What is the maximum capacity of the channel for all users together? The total capacity
can be measured as the sum of rates for each user, which we refer to as the \emph{total rate}.
This sum encapsulates the usability of the channel.

The individual rates we consider in this paper are
sometimes referred to as \emph{private rates}. Similarly, there is
an analogue question about the \emph{common rate}, determining how
much information could we use if we require all users to receive the
same message. This question is outside the scope of our work, but
for completeness we refer the reader to \cite{GKV} for more
information and results.

\subsection{Initial Observations}
The relation between the described problem and the Shannon capacity leads to the following upper bound on
the users' rates and hence for the total rate as well.

\begin{prop}
\label{p-1}
Given $r$ users whose confusion graphs are $G_1, G_2, \ldots, G_r$, a feasible rate vector $(R_1, R_2, \ldots, R_r)$
must satisfy $R_i \leq \log_2 c(G_i)$ for every $1 \leq i \leq r$, hence $\sum_{i=1}^{r} R_i \leq \sum_{i=1}^{r} \log_2 c(G_i)$.
\end{prop}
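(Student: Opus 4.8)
The plan is to reduce the multi-user statement to the classical single-user Shannon capacity bound, treating each user separately. The core observation is that a feasible length-$n$ scheme, when restricted to the messages of a single user $i$ while the messages of all other users are held fixed, yields a family of pairwise non-confusable codewords for user $i$, and the size of any such family is controlled by the independence number of the strong power $G_i^{\boxtimes n}$.

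First I would make precise what decodability means for a single user. Let the scheme map each message tuple $(a_1,\dots,a_r)$ to a codeword $c(a_1,\dots,a_r)\in\Sigma^n$, and recall that two strings $x,y\in\Sigma^n$ are confusable for user $i$ exactly when they are equal or adjacent in the strong power $G_i^{\boxtimes n}$, i.e.\ for every coordinate $j$ one has $x_j=y_j$ or $\{x_j,y_j\}\in E(G_i)$. Since user $i$'s decoder is a function of the received word alone, any two codewords whose tuples differ in the $i$-th coordinate must be non-confusable for user $i$: otherwise they would admit a common possible received word, forcing the decoder to output two different values of $a_i$ on the same input.

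Next I would exploit this by freezing the remaining coordinates. Fixing $a_j$ for all $j\ne i$ and letting $a_i$ range over $1,\dots,m_i$, the observation above shows that the corresponding $m_i$ codewords are distinct and pairwise non-confusable for user $i$, hence form an independent set in $G_i^{\boxtimes n}$. This gives $m_i\le\alpha(G_i^{\boxtimes n})$, whence $\frac{1}{n}\log_2 m_i\le\frac{1}{n}\log_2\alpha(G_i^{\boxtimes n})\le\log_2 c(G_i)$, the last inequality being immediate from $c(G_i)=\sup_n\alpha(G_i^{\boxtimes n})^{1/n}$. Since this holds for every term of a sequence of feasible vectors defining $(R_1,\dots,R_r)$, letting $n\to\infty$ yields $R_i\le\log_2 c(G_i)$, and summing over $i$ gives the total-rate bound.

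The step that requires the most care is the decodability argument: one must verify that it is the difference in the $i$-th coordinate \emph{alone} (regardless of agreement or disagreement in the others) that forces non-confusability for user $i$, so that the reduction to the single graph $G_i^{\boxtimes n}$ is legitimate. Everything after that is the standard single-user capacity computation and introduces no new idea; the only mild point is the inequality $\frac{1}{n}\log_2\alpha(G_i^{\boxtimes n})\le\log_2 c(G_i)$, which follows from the super-multiplicativity of $\alpha$ under the strong product.
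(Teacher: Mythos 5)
Your proof is correct, and it is precisely the argument the paper leaves implicit: Proposition \ref{p-1} is stated there without proof, as an immediate consequence of the single-user Shannon capacity bound. Your reduction --- freezing the other users' messages, noting that codewords with distinct $a_i$ must be pairwise non-confusable for user $i$, and bounding the resulting independent set in $G_i^{\boxtimes n}$ by $c(G_i)^n$ --- is the standard and intended way to make that observation rigorous.
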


Although in practice we have several users, their total rate cannot exceed the possible
rate of a single user who shares all their information. Given a set of confusion graphs $G_1,
\ldots, G_r$, let $G = \cap_{i=1}^{r} G_i$ be the graph over the same alphabet $\Sigma$, where $a,b
\in \Sigma$ are connected in $G$ if and only if they are connected in $G_i$ for every $i$. The
confusion graph $G$ represents the information all users have together and therefore can bound
their total rate as follows.

\begin{prop}
\label{p-2} Given $r$ users whose confusion graphs are $G_1, G_2, \ldots, G_r$, any feasible rate
vector $(R_1, R_2, \ldots, R_r)$ must satisfy $\sum_{i=1}^{r} R_i \leq \log_2 c(\cap_{i=1}^{r}
G_i)$.
\end{prop}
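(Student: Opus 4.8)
The plan is to reduce the $r$-user problem to a single fictitious ``super-user'' who simultaneously holds the information of all $r$ users, and to identify this super-user's confusion graph with $G=\cap_{i=1}^r G_i$. First I would recall the standard graph-theoretic description of feasibility: for a single confusion graph $H$, a length-$n$ scheme transmitting $m$ messages exists if and only if $\Sigma^n$ contains $m$ pairwise non-confusable strings, i.e.\ an independent set of size $m$ in the $n$-th strong power $H^{\boxtimes n}$ (two strings being confusable when in every coordinate they agree or are adjacent in $H$). Consequently the largest feasible $m$ is $\alpha(H^{\boxtimes n})$, the maximal number of messages is governed by the independence number, and the top single-user rate equals $\sup_n \tfrac1n\log_2\alpha(H^{\boxtimes n})=\log_2 c(H)$.

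The key combinatorial step concerns $G=\cap_{i=1}^r G_i$. Fix a feasible length-$n$ scheme with vector $(m_1,\dots,m_r)$, and for each message tuple $a=(a_1,\dots,a_r)$ let $c(a)\in\Sigma^n$ be the transmitted codeword. I claim that distinct tuples produce codewords that are non-confusable in $G^{\boxtimes n}$. Suppose instead that $c(a)$ and $c(a')$ were confusable (equal or adjacent) in $G^{\boxtimes n}$; then in every coordinate $j$ the two strings either agree or are adjacent in $G$, and since adjacency in $G=\cap_i G_i$ forces adjacency in each $G_i$, the two codewords would be confusable for \emph{every} user $i$. But $a\neq a'$ means $a_i\neq a'_i$ for some $i$, and correct decoding by user $i$ requires $c(a)$ and $c(a')$ to be non-confusable for that user, a contradiction. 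Hence the $\prod_{i=1}^r m_i$ codewords are distinct and pairwise non-confusable in $G^{\boxtimes n}$, so they form an independent set there, giving $\prod_{i=1}^r m_i \le \alpha(G^{\boxtimes n})$.

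Taking logarithms and dividing by $n$ yields $\tfrac1n\sum_{i=1}^r\log_2 m_i \le \tfrac1n\log_2\alpha(G^{\boxtimes n}) \le \log_2 c(G)$, using $\alpha(G^{\boxtimes n})^{1/n}\le c(G)$. Letting $n\to\infty$ along the sequence of schemes that defines the rate vector then gives $\sum_{i=1}^r R_i \le \log_2 c(\cap_{i=1}^r G_i)$, as required. The only delicate point is the combinatorial core in the second paragraph: one must verify precisely that confusability in the intersection graph implies \emph{simultaneous} confusability for all users, which is exactly the assertion that a super-user who can separate a pair of letters whenever any single user can has confusion graph $G$. Everything else is the routine passage from codebooks to independent sets and from cardinalities to rates.
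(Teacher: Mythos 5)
Your proof is correct and takes essentially the same approach as the paper: the paper gives Proposition \ref{p-2} only the informal justification that a single super-user holding all users' information has confusion graph $\cap_{i=1}^{r} G_i$ and is therefore bound by its Shannon capacity, and your argument --- observing that distinct message tuples yield codewords pairwise non-confusable in the strong power $G^{\boxtimes n}$, so that $\prod_i m_i \le \alpha(G^{\boxtimes n}) \le c(G)^n$ --- is precisely the formalization of that reduction. Your ``delicate point'' (adjacency in the intersection graph forces simultaneous confusability for all users) is exactly the content of the paper's remark that $G$ represents the information all users have together.
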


 An important simple property of the feasible rate
vectors is \emph{convexity}, which is often referred to as
\emph{time sharing}. Informally, the messages we broadcast can be
shared between two or more broadcasting schemes, where each part
corresponds to a different scheme. This property can be stated
formally as follows.

\begin{prop}
\label{p-3}
Let $G_1, G_2, \ldots, G_r$ be the confusion graphs for $r$ users. Given two feasible rate vectors $\overline{R}, \overline{R}'$ and $\alpha \in [0,1]$,
the rate vector $\alpha \overline{R} + (1-\alpha)\overline{R}'$ is also feasible.
\end{prop}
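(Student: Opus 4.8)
The plan is to reduce the statement to a \emph{concatenation} (or product) operation on schemes, and then to realize the convex combination as a carefully balanced concatenation of many copies of two given schemes. Throughout, call two strings $x,y \in \Sigma^n$ \emph{confusable for user $i$} if in every coordinate $j$ the letters $x_j,y_j$ are equal or adjacent in $G_i$; a length-$n$ scheme of vector $(m_1,\dots,m_r)$ is then an encoding of the message tuples such that, for each user $i$, any two tuples differing in the $i$-th coordinate are mapped to strings that are not confusable for user $i$. The elementary fact I will use repeatedly is that if $x=uv$ and $y=u'v'$ are concatenations with $u,u'$ of equal length and $v,v'$ of equal length, then $x,y$ are confusable for user $i$ if and only if $u,u'$ are confusable and $v,v'$ are confusable; equivalently, $x,y$ become distinguishable as soon as one of the two halves is distinguishable.

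First I would prove a \emph{concatenation lemma}: if $(m_1,\dots,m_r)_{(n)}$ and $(m_1',\dots,m_r')_{(n')}$ are both feasible, then $(m_1 m_1',\dots,m_r m_r')_{(n+n')}$ is feasible. Given the two encodings $E$ and $E'$, define an encoding of length $n+n'$ by sending the tuple whose $i$-th message is the pair $(a_i,a_i') \in [m_i]\times[m_i']$ to the concatenation $E((a_j)_j)\,E'((a_j')_j)$. To verify this is a valid scheme, fix a user $i$ and two tuples differing in the $i$-th coordinate, so $(a_i,a_i')\neq(b_i,b_i')$. If $a_i\neq b_i$ then the first halves are distinguishable for user $i$ by the validity of $E$, while if $a_i'\neq b_i'$ the second halves are distinguishable by the validity of $E'$; by the concatenation fact above the full strings are distinguishable in either case. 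Since user $i$ now has $m_i m_i'$ distinguishable messages, the product vector is feasible.

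It remains to assemble the convex combination. The cases $\alpha\in\{0,1\}$ are immediate, so assume $0<\alpha<1$. Since $\overline{R}$ and $\overline{R}'$ are feasible, I may pick feasible vectors along their defining sequences whose per-user rates $\frac{\log_2 m_i}{n}$ and $\frac{\log_2 m_i'}{n'}$ are within any prescribed tolerance of $R_i$ and $R_i'$. Iterating the concatenation lemma, the vector obtained from $a$ copies of the first scheme and $b$ copies of the second is feasible at length $N = an+bn'$, with user-$i$ rate $\frac{a\log_2 m_i + b\log_2 m_i'}{an+bn'}$. Writing $t=\frac{an}{an+bn'}$, this rate equals $t\cdot\frac{\log_2 m_i}{n}+(1-t)\cdot\frac{\log_2 m_i'}{n'}$, so it suffices to choose integers $a,b$ (growing, so that $N\to\infty$) with $t$ arbitrarily close to $\alpha$, which is possible because $\frac{an}{an+bn'}$ can approximate any value in $(0,1)$ by a suitable ratio $a/b$. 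A diagonal argument over the tolerance then yields a sequence of feasible vectors of lengths tending to infinity whose rates converge to $\alpha R_i+(1-\alpha)R_i'$ for every $i$, establishing feasibility. The main point requiring care is this last step: one must drive the length ratio $t$ toward $\alpha$ while \emph{simultaneously} keeping the underlying per-block rates close to the targets, i.e.\ manage the two independent limits (block length and number of copies) together; the concatenation lemma itself, being purely combinatorial, is routine once the confusability-of-concatenations observation is in hand.
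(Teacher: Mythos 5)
Your proof is correct and follows essentially the same idea as the paper's: the paper's proof is exactly the time-sharing argument, allotting the first $\alpha n$ coordinates to one scheme and the remaining $(1-\alpha)n$ to the other. Your concatenation lemma plus the balancing of $a$ copies against $b$ copies is simply a more careful, fully rigorous rendering of that same construction, handling the limit bookkeeping that the paper's one-paragraph proof leaves implicit.
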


\begin{proof}
Since both $\overline{R}$ and $\overline{R}'$ are feasible rate
vectors, each has some corresponding encoding scheme. Our new
encoding scheme would be to use the encoding scheme corresponding to
$\overline{R}$ in the first $\alpha n$ coordinates and the one
corresponding to $\overline{R}'$ in the remaining $(1-\alpha)n$
coordinates. The resulting rate vector is precisely $\alpha
\overline{R} + (1-\alpha)\overline{R}'$, as required.
\end{proof}

\begin{coro} \label{c-4}
Let $G_1, G_2, \ldots, G_r$ be the confusion graphs for $r$ users. Given $x_1, x_2, \ldots, x_r \in [0,1]$ so that $\sum_{i=1}^{r} x_i = 1$,
the rate vector $(x_1\cdot  \log_2 c(G_1), \ldots, x_r \cdot  \log_2 c(G_r))$ is feasible.
\end{coro}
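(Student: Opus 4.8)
The plan is to realize the target vector as a convex combination of $r$ ``axis'' rate vectors, each capturing the optimal single-user performance, and then to invoke the time-sharing property of Proposition~\ref{p-3}.

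First I would argue that for each fixed $i$ the rate vector $\log_2 c(G_i)\cdot e_i$ is feasible, where $e_i$ denotes the $i$-th standard basis vector (so only user $i$ receives a nonzero rate). The point is that we may dedicate the entire transmission to user $i$: take an optimal single-user encoding for the confusion graph $G_i$, which by the definition of the Shannon capacity achieves rate $\log_2 c(G_i)$ in the limit as $n\to\infty$, and simply send each of the remaining users the constant message (so that $m_j=1$ and $R_j=0$ for all $j\neq i$). Each such user trivially decodes his single possible message, so the scheme is valid and its rate vector is exactly $\log_2 c(G_i)\cdot e_i$.

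Next I would combine these $r$ feasible vectors using convexity. Since Proposition~\ref{p-3} only gives feasibility of convex combinations of \emph{two} vectors, I would extend it to $r$ vectors by a short induction. Assuming without loss of generality that $x_r\neq 1$, write
\[
\sum_{i=1}^{r} x_i\big(\log_2 c(G_i)\cdot e_i\big) = (1-x_r)\sum_{i=1}^{r-1}\frac{x_i}{1-x_r}\big(\log_2 c(G_i)\cdot e_i\big) + x_r\big(\log_2 c(G_r)\cdot e_r\big).
\]
The first summand is, by the induction hypothesis, a feasible convex combination of the first $r-1$ axis vectors (the weights $x_i/(1-x_r)$ sum to $1$), so Proposition~\ref{p-3} with $\alpha=1-x_r$ yields feasibility of the whole combination. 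Since the $i$-th coordinate of $\sum_i x_i(\log_2 c(G_i)\cdot e_i)$ is precisely $x_i\log_2 c(G_i)$, this is exactly the claimed vector.

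I do not expect a genuine obstacle here; the only point requiring care is the first step, namely justifying that the single-user optimum $\log_2 c(G_i)$ is attainable as a limit simultaneously with zero rate for the other users. This is immediate from the definition of the Shannon capacity together with the observation that a constant transmission is always decodable, so the remaining users impose no constraint. Degenerate weights cause no trouble: if some $x_i=0$ the corresponding term vanishes and the induction simply skips that coordinate.
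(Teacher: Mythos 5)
Your proposal is correct and follows essentially the same route as the paper: the paper also observes that the axis vectors $\log_2 c(G_i)\cdot e_i$ are (trivially) feasible and then invokes the time-sharing property of Proposition~\ref{p-3}. You merely spell out two details the paper leaves implicit, namely the single-user scheme with constant messages for the other users and the induction extending pairwise convexity to $r$-fold convex combinations; both are handled correctly.
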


\begin{proof}
For every $1 \leq i \leq r$, the rate vector consisting of rate
$\log_2 c(G_i)$ for the $i$'th user and zero rate for all other
users is trivially feasible. The result thus follows by Proposition
\ref{p-3}.
\end{proof}

\subsection{Previous Results}

The problem of simultaneous communication in noisy channels was previously studied
in the theory of broadcast channels (see \cite{Cover} and its references).
For some scenarios, such as the one we will describe shortly,  a full characterization of all feasible rate vectors was found
(see \cite{Marton} and \cite{Pinsker}). This scenario is described here fully as it is used in some of our proofs
and as we also provide a sketch for an alternative proof for it.

Let $\Sigma_k = \{\sigma_1, \sigma_2, \ldots, \sigma_k\}$ be an
alphabet of size $k$ and let $G_1, G_2, \ldots, G_r$ be the
confusion graphs for the $r$ users, where each confusion graph is a
disjoint union of cliques. Given user $i$, one can view his noise as
receiving $y_i = f_i(x)$ whenever $x$ is transmitted, where $
f_i(x): \Sigma_k \mapsto \{1, 2, \ldots, \ell_i\}$ is the index of
the clique which $x$ belongs to and $\ell_i$ is the number of
cliques in the $i$'th user's confusion graph. Note
that we consider isolated vertices as cliques of size one and hence
$f_i$ is well defined up to the order of the cliques.

\begin{definition} \label{d-1}
Given a probability distribution $p$ over $\Sigma_k$ and a subset of the users $I \subseteq \{1, 2,
\ldots, r\}$, let $H_{(p)} (  \{ Y_i \}_{i \in I} )$ be the binary entropy of the random variables
$\{Y_i \}_{i \in I}$ where $Y_i = f_i(X)$ and $X$ is the random variable distributed over
$\Sigma_k$ according to $p$.
\end{definition}

For each subset of the users $I \subseteq \{1, 2, \ldots, r\}$, the alphabet $\Sigma_k$ can be
partitioned into $s \leq k$ disjoint parts ($A_1, \ldots, A_s$) according to what these users
receive ($\{ f_i(x) \}_{i \in I}$). These users cannot distinguish between different letters from
the same part $A_j$, so their joint information when sending a letter $X$ from $\Sigma_k$ according
to the probability distribution $p$ can be computed as
$$
H_{(p)} (  \{ Y_i \}_{i \in I} ) = -\sum_{ 1 \leq j \leq s } \Pr[ X \in A_j] \cdot \log_2 \Pr[X \in
A_j] ~.
$$
Therefore in a sense that will be made precise later, when using only messages in which the letters
are distributed according to some distribution $p$, we expect no subset $I$ of users to have total
rate which exceeds $H_{(p)} ( \{ Y_i \}_{i \in I} )$.

The following theorem, for which we sketch an alternative proof,
provides the full characterization of all feasible rate vectors.

\begin{theo}[\cite{Pinsker}]\label{t-5}
Let $G_1, G_2, \ldots, G_r$ be the confusion graphs for $r$ users over the alphabet $\Sigma_k =
\{\sigma_1, \sigma_2, \ldots, \sigma_k\}$, where each confusion graph is a disjoint union of
cliques. Using the notations and definitions above, a rate vector $(R_1, R_2, \ldots, R_r)$ is
feasible if and only if there exists a probability distribution $p = (p_1, \ldots, p_k)$ over
$\Sigma_k$ so that for every subset $I \subseteq \{1, 2, \ldots, r\}$ of the users,
$$
\sum_{i \in I} R_i \leq H_{(p)} (\{Y_i\}_{i \in I}).
$$
\end{theo}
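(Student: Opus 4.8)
The statement has two directions: necessity (the ``only if,'' a converse bound) and sufficiency (the ``if,'' an achievability/coding result). The plan is to prove the converse by a short entropy computation followed by a compactness argument, and to prove achievability by reducing --- via convexity of the feasible region --- to the extreme points of the constraint region, each of which is realized by a layered coding construction.

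For the converse, suppose $(R_1,\ldots,R_r)$ is feasible, witnessed by schemes $(m_1,\ldots,m_r)_{(n)}$ with $\tfrac{\log_2 m_i}{n}\to R_i$. Fix $n$, feed the encoder a uniform independent message tuple, and let $X_1,\ldots,X_n$ be the resulting random codeword letters, with $X_t\sim p_t$. For each $I$, since every user $i\in I$ decodes $a_i$ from $(f_i(X_1),\ldots,f_i(X_n))$, the tuple $\{a_i\}_{i\in I}$ is a deterministic function of the joint outputs, so
\[
\sum_{i\in I}\log_2 m_i \;=\; H(\{a_i\}_{i\in I}) \;\le\; H\big((f_i(X_t))_{t\le n,\,i\in I}\big) \;\le\; \sum_{t=1}^{n} H_{(p_t)}(\{Y_i\}_{i\in I}).
\]
Dividing by $n$ and using that $p\mapsto H_{(p)}(\{Y_i\}_{i\in I})$ is concave (it is the entropy of a variable whose law depends linearly on $p$), the right-hand side is at most $H_{(\bar p)}(\{Y_i\}_{i\in I})$ for the averaged distribution $\bar p^{(n)}=\tfrac1n\sum_t p_t$. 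Thus each $\bar p^{(n)}$ satisfies all subset inequalities up to the $o(1)$ error coming from $\tfrac{\log_2 m_i}{n}-R_i$. Since the simplex over $\Sigma_k$ is compact, a subsequence of $\bar p^{(n)}$ converges to some $p$, and continuity of entropy yields a single $p$ with $\sum_{i\in I}R_i\le H_{(p)}(\{Y_i\}_{i\in I})$ for all $I$.

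For achievability, fix a distribution $p$ satisfying all the inequalities. The key observation is that $g(I):=H_{(p)}(\{Y_i\}_{i\in I})$ is monotone and submodular with $g(\emptyset)=0$, so the region $\{R\ge 0:\ \sum_{i\in I}R_i\le g(I)\ \forall I\}$ is a polymatroid; its vertices are the chain points $R_{\pi(j)}=H_{(p)}(Y_{\pi(j)}\mid Y_{\pi(1)},\ldots,Y_{\pi(j-1)})$, one per ordering $\pi$ of the users. Because the feasible region is convex (Proposition \ref{p-3}) and trivially down-closed (one may discard messages), it suffices to realize each such chain point. I would achieve a chain point (say for the identity ordering) by a layered construction on $p$-typical codewords: aim for $m_j=2^{n(H_{(p)}(Y_j\mid Y_{<j})-\epsilon)}$ messages for user $j$, assign to each $Y_j$-output class of a $p$-typical codeword a label in $\{1,\ldots,m_j\}$ (so user $j$ reads $a_j$ off $f_j(x^n)$ alone), and argue that every tuple $(a_1,\ldots,a_r)$ is realized by at least one typical codeword. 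Sending $\epsilon\to0$ and invoking the limit definition of feasible rates then yields the chain point, and hence the whole polymatroid for $p$.

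The main obstacle is exactly this last step. The feasibility notion is zero-error, so one must guarantee that \emph{every} message tuple is simultaneously covered by some codeword carrying the prescribed per-user labels, not merely that a random tuple is covered with high probability. This is where the full family of subset constraints is needed, rather than just the single bound $\sum_i R_i\le g(\{1,\ldots,r\})$: the inequality for $I$ controls the correlation between two codewords that the users in $I$ cannot tell apart (they agree on $\{Y_i\}_{i\in I}$), and these are precisely the collisions that could leave a tuple uncovered. I would control them with a second-moment / covering-lemma estimate in which the back-off $\epsilon$ together with the slack $\sum_{i\in I}R_i\le g(I)$ makes the relevant variance negligible. Obtaining a coverage-failure probability small enough to survive a union bound over all $2^{n\sum_j R_j}$ tuples --- rather than the merely polynomial decay that a bare Chebyshev bound would give --- is the delicate point I expect to absorb most of the work.
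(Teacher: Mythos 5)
Your converse (``only if'') direction is correct, and it takes a genuinely different route from the paper's. You run a direct entropy argument: feed uniform independent messages, use zero-error decodability to make $\{a_i\}_{i\in I}$ a function of the outputs, apply subadditivity across coordinates, then concavity of $p \mapsto H_{(p)}(\{Y_i\}_{i\in I})$ to replace the per-coordinate marginals by their average, and finish by compactness of the simplex. The paper instead counts types: it buckets codewords by their empirical letter frequencies, selects the largest of the at most $n^{k-1}$ buckets, and applies the Stirling-based Lemma \ref{l-9} to bound $\prod_i m_i$ by $n^{k-1}2^{H_{(p)}(\{Y_i\}_{i\in I})n}$. Both are sound; your version avoids type counting on this side and handles irrational distributions without the paper's subsequence remark.

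The achievability (``if'') direction, however, has a genuine gap, and it sits exactly where you flag it. Your reduction to chain points of the polymatroid is legitimate ($g$ is normalized, monotone and submodular; the feasible region is convex by Proposition \ref{p-3} and down-closed), but it does not make the covering problem easier: at a chain point you must still show that, after randomly labelling each user's output classes, \emph{every} one of the exponentially many message tuples is consistent with some type-$p$ codeword. You correctly note that a bare second-moment/Chebyshev estimate gives only polynomially small failure probability per tuple, which cannot survive the union bound over $2^{n\sum_j R_j}$ tuples --- and then you defer precisely this step as ``the delicate point.'' That deferred point is the core of the achievability proof, and the missing idea is a correlation inequality that upgrades the same first- and second-moment data into an exponential bound: the paper invokes the extended Janson inequality (Theorem \ref{t-janson}), which bounds the probability that no consistent codeword exists by $e^{-\mu^2/2\Delta}$. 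Combined with the built-in slack $m_i = 2^{R_i n}/n^{k+2}$, which yields $\prod_{i \in I} m_i \leq N_{(p)}(n;\{Y_i\}_{i\in I})/n^2$ (inequality \eqref{e-1}), this gives failure probability below $k^{-n}$ per tuple, beating the union bound over at most $k^n$ tuples. Without Janson or a comparable exponential tool (e.g.\ Suen's inequality, or a sequential argument at chain points that reveals the users' independent labelings one at a time --- an argument you do not make), your plan does not close.
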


An interesting special case of the above theorem is the symmetric dense scenario of $r=k$ users,
where the confusion graph $G_i$ of user $i$  is a clique over $\Sigma_k - \{\sigma_i\}$. In other
words, user $i$ only distinguishes the letter $\sigma_i$ from all other letters. When decoding, the
relevant information for user $i$ is only the locations of $\sigma_i$ in the transmitted message.

\begin{coro} \label{c-6}
For every fixed $k \geq 3$, $(\frac {\log_2 k} {k}, \ldots, \frac {\log_2 k}{k})$ is a feasible
rate vector over the alphabet $\Sigma_k = \{\sigma_1, \sigma_2, \ldots, \sigma_k\}$, when each
confusion graph $G_i$ is a clique on $\Sigma_k-\{\sigma_i\}$.
\end{coro}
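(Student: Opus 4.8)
The plan is to apply Theorem \ref{t-5} directly with a symmetric choice of distribution. First I would observe that each confusion graph $G_i$, being a clique on $\Sigma_k - \{\sigma_i\}$ together with the isolated vertex $\sigma_i$, is indeed a disjoint union of cliques (namely the singleton $\{\sigma_i\}$ and the $(k-1)$-clique on the remaining letters). Hence the hypotheses of Theorem \ref{t-5} are satisfied, and it suffices to exhibit a probability distribution $p$ for which all the subset entropy inequalities hold when $R_i = \frac{\log_2 k}{k}$ for every $i$.

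The natural candidate, dictated by the symmetry of the construction, is the uniform distribution $p = (1/k, \ldots, 1/k)$. The key computation is to determine, for a fixed subset $I \subseteq \{1, \ldots, k\}$, the partition of $\Sigma_k$ induced by $\{f_i\}_{i \in I}$. Since user $i$ separates only $\sigma_i$ from the remaining letters, each $\sigma_j$ with $j \in I$ is singled out (distinguished by user $j$), whereas all letters $\sigma_j$ with $j \notin I$ are lumped into a single part that is indistinguishable to every user in $I$. Writing $t = |I|$, this produces $t$ singleton parts and one part of size $k - t$, so that under the uniform distribution
$$
H_{(p)}(\{Y_i\}_{i \in I}) = \frac{t}{k}\log_2 k + \frac{k-t}{k}\log_2\frac{k}{k-t},
$$
where the second summand is understood to vanish when $t = k$.

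It then remains to verify the inequality of Theorem \ref{t-5}. Its left-hand side is $\sum_{i \in I} R_i = t \cdot \frac{\log_2 k}{k}$, which matches exactly the first term above, so the required inequality reduces to $0 \leq \frac{k-t}{k}\log_2\frac{k}{k-t}$. This holds for every $1 \leq t \leq k$, with equality precisely when $t = k$; the latter case matches the bound of Proposition \ref{p-2}, since $\cap_i G_i$ is the empty graph on $\Sigma_k$, whose Shannon capacity is $k$. Applying Theorem \ref{t-5} then yields feasibility.

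The point requiring care — rather than a genuine obstacle — is the identification of the induced partition together with the observation that, by the symmetry of the construction, the entire collection of $2^k$ subset constraints collapses to the single parameter $t = |I|$, reducing the verification to one elementary monotone inequality. Once this reduction is made, no individual step is delicate.
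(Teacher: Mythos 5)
Your proof is correct and is essentially identical to the paper's own argument: both apply Theorem \ref{t-5} with the uniform distribution, identify the partition induced by a subset $I$ as $|I|$ singletons plus one lumped part of size $k-|I|$, and verify that $H_{(p)}(\{Y_i\}_{i \in I}) = \frac{|I|}{k}\log_2 k + \frac{k-|I|}{k}\log_2\frac{k}{k-|I|} \geq \sum_{i \in I} R_i$. The only (immaterial) difference is that you fold the case $I = [k]$ into the general formula via the convention that the second term vanishes, whereas the paper treats the full set and proper subsets separately.
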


Corollary \ref{c-6} indicates the possible gain of encoding schemes for several users
simultaneously. The total rate here is $\log_2 k$, whereas using convexity with encoding schemes
for single users cannot exceed a total rate of 1 (as this is the maximum rate for each single user;
the Shannon capacity $c(G_i)$ is precisely 2 for each $1 \leq i \leq k$). However, in some cases
there is no such gain. Several examples are discussed in what follows.

\subsection{Our Results}
For simplicity we omit all floor and ceiling signs whenever these are not crucial.

Let $\Sigma_k = \{\sigma_1, \sigma_2, \ldots, \sigma_k\}$ be an alphabet of size $k$ and let
$\Sigma_d = \{ \sigma_1, \ldots, \sigma_d\}$ denote the set of the first $d$ letters of $\Sigma_k$,
where $2 \leq d \leq k$. Consider the case where user 1 has a confusion graph
$$
G_1 = (\Sigma_k, \{ ab \mid a \in \Sigma_k \wedge b \in \Sigma_k - \Sigma_d\}),
$$
meaning the complete graph over $\Sigma_k$ minus a clique over $\Sigma_d$.  The Shannon capacity of
such graphs is easily shown to be $c(G_1) = d$, hence the maximum rate of user 1 is at most $\log_2
d$. The following results indicate, that for two different confusion graphs of user 2, nothing can
be gained beyond convexity of single user encoding schemes. We need the following definition.

\begin{definition} \label{d-2}
A rate vector is  \emph{optimal} if no user can increase his rate while the other user maintains
the same rate.
\end{definition}

\noindent Note that the total rate of such optimal rate vectors does not necessarily reach the
maximum total rate possible.

\begin{theo} \label{t-7}
In the scenario described above for $2 \leq d \leq k$, when user 2 has the empty confusion graph $G_2 = (\Sigma_k, \emptyset)$,
the rate vectors $(\alpha \log_2 d, (1-\alpha)\log_2 k)$ for $\alpha \in [0, 1]$ are optimal.
\end{theo}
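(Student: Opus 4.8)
The plan is to separate the statement into feasibility and optimality, where feasibility is immediate and optimality reduces to a single global converse inequality proved by a combinatorial shadow argument. For feasibility, recall $c(G_1)=d$ and that $G_2=(\Sigma_k,\emptyset)$ has $c(G_2)=k$; Corollary~\ref{c-4} applied with $x_1=\alpha$, $x_2=1-\alpha$ then shows $(\alpha\log_2 d,(1-\alpha)\log_2 k)$ is feasible for every $\alpha\in[0,1]$. For optimality, by Definition~\ref{d-2} it suffices to prove the converse that every feasible $(R_1,R_2)$ obeys
$$\frac{R_1}{\log_2 d}+\frac{R_2}{\log_2 k}\le 1.$$
Indeed, a feasible vector dominating some $P_\alpha=(\alpha\log_2 d,(1-\alpha)\log_2 k)$ (one coordinate strictly larger, the other no smaller) would force the left-hand side above $\alpha+(1-\alpha)=1$, so this one inequality makes every $P_\alpha$ optimal. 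Note that none of the bounds from Propositions~\ref{p-1} and~\ref{p-2} is tight on the interior of the segment, so a genuinely new converse is needed.

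For the combinatorial set-up, fix a length-$n$ scheme with codewords $c(a_1,a_2)\in\Sigma_k^n$. Since user 2 has the empty confusion graph, distinct messages yield distinct codewords; since user 1 must recover $a_1$, any two codewords with $a_1\ne a_1'$ must be non-confusable in $G_1^{\boxtimes n}$. The structure of $G_1$ makes this explicit: its only non-adjacent pairs are distinct letters of $\Sigma_d$, so $x,y$ are distinguishable by user 1 iff some coordinate $j$ has $x_j,y_j\in\Sigma_d$ with $x_j\ne y_j$. For $x\in\Sigma_k^n$ define its \emph{shadow} $\mathrm{sh}(x)\subseteq\Sigma_d^n$ as the set of strings obtained from $x$ by replacing every letter outside $\Sigma_d$ by an arbitrary letter of $\Sigma_d$. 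Then $x,y$ are confusable exactly when $\mathrm{sh}(x)\cap\mathrm{sh}(y)\ne\emptyset$. Setting $C_{a_1}=\bigcup_{a_2}\mathrm{sh}(c(a_1,a_2))\subseteq\Sigma_d^n$, the cross-group non-confusability forces the sets $C_{a_1}$ to be pairwise disjoint, so $\sum_{a_1}|C_{a_1}|\le d^n$.

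The heart of the argument is the shadow inequality: for every $B\subseteq\Sigma_k^n$,
$$\Big|\bigcup_{x\in B}\mathrm{sh}(x)\Big|\ \ge\ |B|^{\log_k d},\qquad \log_k d=\frac{\log_2 d}{\log_2 k}.$$
Applied to each group $B_{a_1}=\{c(a_1,a_2)\}_{a_2}$ (which consists of $m_2$ distinct strings) it gives $|C_{a_1}|\ge m_2^{\log_k d}$; together with disjointness this yields $m_1\,m_2^{\log_k d}\le d^n$. Taking $\log_2$, dividing by $n$, and letting $n\to\infty$ produces exactly $R_1+(\log_k d)R_2\le\log_2 d$, which is the required converse. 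The bound is tight on the time-sharing scheme of Corollary~\ref{c-4}, where a single group runs over $\{s\}\times\Sigma_k^{(1-\alpha)n}$ and its shadow fills $\{s\}\times\Sigma_d^{(1-\alpha)n}$, so no slack is lost and the converse matches the achievable line.

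The shadow inequality is the main obstacle; the rest is bookkeeping. The plan is to prove it by induction on $n$ (tensorization): the case $n=1$ is the elementary check that a set of $b$ letters casts a shadow of size $\min(b,d)\ge b^{\log_k d}$, using $b\le k$. For the inductive step one splits $\Sigma_k^n=\Sigma_k\times\Sigma_k^{n-1}$, writes each first-coordinate slice of the shadow as $\mathrm{sh}(B_\tau)\cup R$ with $R=\bigcup_{\sigma\notin\Sigma_d}\mathrm{sh}(B_\sigma)$ the part common to all $\tau\in\Sigma_d$, and combines the inductive bounds $|\mathrm{sh}(B_\sigma)|\ge|B_\sigma|^{\log_k d}$ via a convexity estimate to recover $\big(\sum_\sigma|B_\sigma|\big)^{\log_k d}$. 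An alternative is an entropy argument: place the uniform law on $B$, randomize the outside coordinates to uniform $\Sigma_d$ letters to get $W\in\mathrm{sh}(B)$, and show $H(X)\le(\log_2 k/\log_2 d)\,H(W)$ through a single-coordinate conditional inequality that tensorizes. Establishing that single-letter bound is the delicate point; once the lemma is in hand the theorem follows.
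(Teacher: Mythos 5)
Your proposal is correct and follows essentially the same route as the paper: your shadow inequality is exactly the paper's Lemma \ref{l-11} in dual form (the shadow of $B$ is the intersection of the replacement-closure of $B$ with $\Sigma_d^n$), your first-coordinate slicing induction is the same induction used there, and the ``convexity estimate'' your inductive step requires is precisely the paper's Lemma \ref{l-10}. The only cosmetic differences are that you sum the pairwise disjoint shadow sets over all of user 1's messages to get the global bound $m_1\, m_2^{\log_k d}\le d^n$, where the paper instead selects one small family by averaging, and that working with shadows lets you bypass the paper's explicit justification that the families $\FF_a$ may be assumed closed under letter replacements.
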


\begin{theo} \label{t-8}
In the scenario described above for $2 \leq d \leq \frac {k+1} {2}$, when user 2 has the complement confusion
graph, that is
$$
G_2 = \overline{G}_1 = (\Sigma_k, \{ ab \mid a,b \in \Sigma_d \}),
$$
the rate vectors $(\alpha \log_2 d, (1-\alpha)\log_2 (k-d+1))$ for $\alpha \in [0, 1]$ are optimal.
\end{theo}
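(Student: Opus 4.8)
The plan is to split the statement into an achievability half and a converse (optimality) half. Achievability is immediate from Corollary~\ref{c-4}: here $c(G_1)=d$, and since $G_2$ is the disjoint union of the clique on $\Sigma_d$ with $k-d$ isolated vertices we have $\alpha(G_2)=c(G_2)=k-d+1$, so taking $x_1=\alpha$ and $x_2=1-\alpha$ makes $(\alpha\log_2 d,\,(1-\alpha)\log_2(k-d+1))$ feasible. For optimality I would invoke convexity (Proposition~\ref{p-3}): the feasible region is convex and contains both axis endpoints $(\log_2 d,0)$ and $(0,\log_2(k-d+1))$, hence the whole chord joining them. The chord points are optimal exactly when the region contains no point strictly to the north-east of the chord, so it suffices to prove the single converse inequality
\[
\frac{R_1}{\log_2 d}+\frac{R_2}{\log_2(k-d+1)}\le 1
\]
for every feasible vector, which at the level of a length-$n$ scheme with parameters $(m_1,m_2)$ reads $\frac{\log_2 m_1}{\log_2 d}+\frac{\log_2 m_2}{\log_2(k-d+1)}\le n+o(n)$.

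For the converse I would first record two structural facts about any valid codebook $\{c(a_1,a_2)\}\subseteq\Sigma_k^n$. Since $G_2$ is a disjoint union of cliques, user $2$'s view is the coordinate-wise quotient $g(x)$ that collapses all of $\Sigma_d$ to one symbol $*$ while keeping each outside letter intact; $g(x)$ lives in an alphabet of size $k-d+1$, and zero-error decoding forces $a_2$ to be a function of $g(c(a_1,a_2))$. For user $1$, two strings are indistinguishable precisely when they agree on every coordinate where both lie in $\Sigma_d$; hence $a_1$ must be constant on each confusability class, and in particular two codewords with the same inside-support $T(x)=\{j:x_j\in\Sigma_d\}$ and the same $\Sigma_d$-values on $T(x)$ share their $a_1$. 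I would then organize the codebook according to the support $T$.

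If an entire scheme used only codewords of a single inside-support $T$ with $|T|=t$, these two facts give $m_1\le d^{t}$ and $m_2\le(k-d)^{\,n-t}$, so $\frac{\log_2 m_1}{\log_2 d}+\frac{\log_2 m_2}{\log_2(k-d+1)}\le t+(n-t)\frac{\log_2(k-d)}{\log_2(k-d+1)}\le n$, with room to spare. The entire difficulty is therefore the \emph{mixing} of supports in a genuine codebook, and the main obstacle is that the naive per-user entropy relaxation $\log_2 m_i\le\sum_j H(\cdot)$ of the respective single-letter quotient (the inside-quotient $\rho$ collapsing the outside letters to one symbol, and $g$) is provably too weak here: the single-letter bracket $\frac{H(\rho)}{\log_2 d}+\frac{H(g)}{\log_2(k-d+1)}$ can exceed $1$, so it would permit a bulge and cannot by itself yield the triangle. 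I would therefore control the interaction across different supports by a genuinely combinatorial zero-error argument rather than an entropy inequality, showing that spreading codewords over many supports cannot beat the best fixed-support tradeoff. This is exactly where the hypothesis $d\le\frac{k+1}{2}$, equivalently $\log_2 d\le\log_2(k-d+1)$ (there are at least $d-1$ outside letters), enters: it guarantees that the outside letters are plentiful enough that each coordinate user $1$ spends inside $\Sigma_d$ costs user $2$ at least as much capacity as it gains, so the clean time-sharing chord is optimal. Concretely I would carry this out by charging, for each pair of distinct $a_1$-classes, an inside coordinate that forces their conflict, bounding how many $a_2$-values a single coordinate can separate, and summing the charges to recover the weighted bound; verifying that this charging is tight only along the chord is the step I expect to require the most care.
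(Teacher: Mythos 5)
Your setup is sound and matches the paper's: achievability via Corollary \ref{c-4}, the reduction of optimality to the single converse inequality $\frac{R_1}{\log_2 d}+\frac{R_2}{\log_2(k-d+1)}\le 1$, the correct description of both users' views, and the easy fixed-support computation are all fine. But the converse for codebooks that mix inside-supports --- which you yourself identify as ``the entire difficulty'' --- is never actually proved. What you offer is a plan (``charging, for each pair of distinct $a_1$-classes, an inside coordinate that forces their conflict \dots summing the charges'') together with an admission that verifying it is the hard step. No inequality is derived from this charging, and it is far from clear the scheme can work as described: the number of pairs of $a_1$-classes is exponential in $n$ while there are only $n$ coordinates to charge, so coordinates must be reused massively, and controlling that reuse is exactly the problem. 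Moreover, the hypothesis $d\le\frac{k+1}{2}$ enters your sketch only as a heuristic (``outside letters are plentiful enough''); since the paper exhibits counterexamples (Appendix \ref{a-2}) showing its key counting inequality genuinely fails for $d>\frac{k+1}{2}$, any correct converse must invoke this hypothesis in a precise quantitative way, which your argument does not.

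For comparison, the paper closes this gap as follows: fix user 1's rate at $\alpha\log_2 d$ and partition the codewords into families $\FF_a$ by user 1's message; \emph{monotonize} each family so it is closed under replacing every outside letter by an arbitrary letter (a step justified in the proof of Theorem \ref{t-7}, and absent from your proposal); average over the $d^{\alpha n}$ families to find one with $|\FF_a\cap\Sigma_d^n|\le d^{(1-\alpha)n}$; and then apply Lemma \ref{l-12}, which bounds user 2's message count within that family by $(k-d+1)^{\log_d|\FF_a\cap\Sigma_d^n|}\le(k-d+1)^{(1-\alpha)n}$. Lemma \ref{l-12} is proved by induction on $n$ using the convexity-type inequality of Lemma \ref{l-10} with $a=k-d+1$, $b=d$, and the requirement $b\le a$ there is precisely where $d\le\frac{k+1}{2}$ is used. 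Your proposal would need to supply a rigorous analogue of this closure-plus-induction machinery (or make the charging argument precise and prove it); as it stands, the core of the theorem is missing.
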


Finally, we provide  a full characterization of all feasible rate vectors for all scenarios containing two users and alphabet of size three
(Propositions \ref{p-15}, \ref{p-16} and \ref{p-17}).

\subsection{Organization}
The rest of the paper is organized as follows. In Section \ref{s-disjoint-cliques} we present a
sketch of an alternative proof to the characterization of the feasible rate vectors for the first
scenario, where each confusion graph is a union of disjoint cliques, (Theorem \ref{t-5}), and
demonstrate how combining encoding schemes for many users can sometimes outperform convexity
(Corollary \ref{c-6}). Section \ref{s-clique-minus-clique} deals with the second family of
confusion graphs described above in which convexity yields the optimal rate vectors (Theorems
\ref{t-7} and \ref{t-8}). Combining these results, one can characterize all feasible rate vectors
for all scenarios involving two users and alphabet of size three. In Section \ref{s-users2-alphabet3}
we elaborate in more details on this analysis. The final Section \ref{s-conclusions} contains some
concluding remarks and open problems.


\section {Unions of disjoint cliques - outperforming convexity}
\label{s-disjoint-cliques}

We consider the case where the confusion graph of each user $i$ is a disjoint union of cliques.
This case is of special interest as a full description of all feasible rate vectors was known and
it is deterministic in the sense that given the transmitted letter, we can transform it deterministically to the
different symbols that each user receives. Moreover, choosing specific confusion graphs, it
demonstrates how the maximum possible total rate can be achieved by combining schemes for many
users (and only this way), even when the confusion graphs are nearly complete.

\subsection{An alternative proof of Theorem \ref{t-5} (sketch) }

Let $G_1, G_2, \ldots, G_r$ be a set of confusion graphs for $r$ users over the alphabet
$\Sigma_k$, where each $G_i$ is a disjoint union of cliques. Given a subset of the users $I
\subseteq [r]$ (where $[r] = \{1, 2, \ldots, r\}$), the following definition and Lemma connects
between the possible number of messages to these users and their binary entropy, when restricted to
a specific distribution of the messages

\begin{definition} \label{d-3}
Given a probability distribution $p$ over $\Sigma_k$ and a subset of the users $I \subseteq [r]$,
let $N_{(p)} (n; \{ Y_i \}_{i \in I})$ denote the number of possible (different) messages for these
users under the restriction that each message is originated in a length $n$ message over $\Sigma_k$
in which $\sigma_i$ appears $p_i n$ times.
\end{definition}

\begin{lemma}
\label{l-9}
$$
\frac {2^{H_{(p)}(\{Y_i\}_{i\in I})n}} {n^{k}} \leq N_{(p)} (n; \{ Y_i \}_{i \in I}) \leq
2^{H_{(p)} ( \{ Y_i \}_{i \in I})n}.
$$
\end{lemma}

\noindent The proof of Lemma \ref{l-9} is a simple consequence of Stirling's formula, which is left to the reader.

\begin{proof}[Upper bound]
Given a scheme of a fixed length $n$ which realizes $(m_1, m_2,
\ldots, m_r)$ messages for the $r$ users, one can divide it into
families according to the number of appearances of the letters
$\sigma_i$ in each message. As there are only $k$ letters and all
the messages are of length $n$, there are at most $n^{k-1}$
different families. Given the probability $p = (p_1, p_2, \ldots,
p_k)$ corresponding to the largest family, Lemma \ref{l-9} therefore
indicates that
$$
\frac {\prod_{i \in [r]} m_i} {n^{k-1}} \leq N_{(p)} (n; \{ Y_i \}_{i \in I}) \leq 2^{H_{(p)} ( \{
Y_i \}_{i \in I})n}.
$$
Recall that $R_i = \lim_{n \mapsto \infty} \log_2 m_i / n$ and hence
\begin{eqnarray*}
\sum_{i \in I} R_i
  & = &  \lim_{n \mapsto \infty} \sum_{i \in I} \frac {\log_2 m_i} {n}
     = \lim_{n \mapsto \infty} \frac {\log_2 \prod_{i \in I} m_i} {n} \\
  & \leq & \lim_{n \mapsto \infty} \frac {H_{(p)}( \{ Y_i \}_{i \in I} ) n + (k-1)\log_2 n }
  {n} \\
  & = &  H_{(p)}( \{ Y_i \}_{i \in I} )
\end{eqnarray*}
as required.
\end{proof}

\begin{remark} Formally, the popular probability distribution
$p=(p_1, p_2, \ldots, p_r)$ depends on $n$ but one can take a
subsequence for which it converges to a single vector $p$,
justifying the computation above. A similar argument
can be used in the following lower bound proof, justifying it for
any probability distribution, even one containing irrational
probabilities. 
\end{remark}

\begin{remark} Similar arguments are sometimes referred to as
\emph{type counting} and could be found, for example, in the book by
Csisz\'{a}r and K\"{o}rner \cite{CK}. They also provide a proof for
a claim similar to Lemma \ref{l-9}.
\end{remark}

\begin{proof}[Lower bound]
Let $p = (p_1, \ldots, p_k)$ be a probability distribution over
$\Sigma_k$ and fix $(R_1, R_2, \ldots, R_r)$ so that for any subset
of the users $I \subseteq [r]$, $\sum_{i \in I} R_i \leq H_{(p)}( \{
Y_i \}_{i \in I} )$. Given some large $n$, we set the number of
messages $m_i$ for each user $i$ to be $m_i = \frac {2^{R_i n}}
{n^{k+2}}$ (which clearly satisfies $\lim_{n \mapsto \infty} \frac
{\log_2 m_i} {n} = \lim_{n \mapsto \infty} \frac {R_i n -
(k+2)\log_2 n} {n} = R_i$). By Lemma \ref{l-9}, for every subset of
the users $I \subseteq [r]$,
\begin{eqnarray}
\label{e-1} \prod_{i \in I} m_i & = &
   \prod_{i \in I} \frac {2^{R_in}} {n^{k+2}} =
   \frac {2^{\sum_{i\in I} R_in}} {n^{(k+2)|I|}} \nonumber \\
  & \leq & \frac {2^{H_{(p)}( \{ Y_i \}_{i \in I} )n}} {n^{k+2}} \leq
   \frac {N_{(p)}(n; \{ Y_i \}_{i \in I} )} {n^2}.
\end{eqnarray}
Our encoding scheme will use only messages in which the letters of
$\Sigma_k$ are distributed according to $p$. For each user $i$ there
are $N_{(p)}(n; \{Y_i\})$ different messages that he can identify.
We randomly divide them into $m_i$ families $\FF_{i, 1}, \ldots,
\FF_{i, m_i}$, where each family represents a different message for
user $i$. When the message $\overline{x} \in \Sigma_k^n$ is
transmitted, user $i$ receives $\overline{y}_i = f_i(\overline{x}) =
f_i(x_1) \cdots f_i(x_n)$ and decodes the message $j$, the single $j
\in [m_i]$ for which $\overline{y}_i \in \FF_{i, j}$.

In order to complete the proof we show the described encoding scheme is valid for some selection of
families  $\FF_{i,j}$. Such a scheme is valid if for every set of messages $\{ j_i \in [m_i] \}_{i
\in [r]}$ there exists a message $\overline{x}$ so that for every user $i \in [r]$, $\overline{y}_i
\in \FF_{i, j_i}$.

Given fixed messages $j_1, j_2, \ldots, j_r$ for the $r$ users,
using the extended Janson inequality (c.f., e.g., \cite{AS}, Chapter
8) and \eqref{e-1} one can show that the probability that there
exists no valid message as required is less than $\frac {1} {k^n}$.
As there are $\prod_{i=1}^{r} m_i \leq k^n$ distinct choices for
messages $j_1, \ldots, j_r$, the assertion of Theorem \ref{t-5}
follows by the union bound.
\end{proof}

\begin{remark}
A full citation of the extended Janson inequality can be found in
Appendix \ref{a-0}, together with a more detailed description of how it is
used here.
\end{remark}

\subsection{Proof of Corollary \ref{c-6}}

This corollary of Theorem \ref{t-5} is for the symmetric dense case where there are $r=k$ users,
and each user $i$ distinguishes a single letter from all other letters. This simple case
demonstrates how the maximum rate can be achieved only by mixing the messages for all the users.
Since each confusion graph $G_i$ is a clique on $\Sigma_k - \{\sigma_i\}$, the Shannon capacity of
this graph is 2 and therefore the maximal rate for any single user is 1. However, the theorem shows
that indeed a total rate of $\log_2 k$ can be achieved, and obviously this is best possible.

Let $k \geq 3$ be fixed.
In order to prove the theorem, a probability distribution $p$ is required so that for every subset of the users
$I \subseteq [r] = [k]$,
\begin{equation}
\label{e-2} \sum_{i \in I} R_i = |I| \frac {\log_2 k} {k} \leq H_{(p)}(\{Y_i\}_{i \in I}).
\end{equation}
Let us consider the uniform probability distribution $p$, $p_i =
1/k$ for every letter $\sigma_i$. When considering all users
together, the random variables $\{Y_i\}_{i \in [r]}$ indicate the
exact letter $x$ that was transmitted. Therefore, since the entropy
is exactly $\log_2 k$, $H_{(p)}(\{Y_i\}_{i \in [r]}) = \log_2 k$
which indeed satisfies \eqref{e-2} as $\sum_{i \in [r]} R_i = r
\frac {\log_2 k } {k} = \log_2 k$.

Given a subset of the users $I \subset [r]$, the random variables $\{Y_i\}_{i \in I}$ indicate
which letter $x$ was transmitted if $x = \sigma_i$ for $i \in I$ or alternatively, that some other
letter was transmitted. Hence its binary entropy satisfies
\begin{eqnarray*}
H_{(p)}(\{Y_i\}_{i \in I}) & = & |I| \frac {\log_2 k} {k} + \frac
{k-|I|} {k} \log_2 \frac {k} {k - |I|} \\
 & > &  |I| \frac {\log_2 k} {k} = \sum_{i \in I} R_i
\end{eqnarray*}
thus \eqref{e-2} holds for every subset of the users $I \subseteq
[r]$, as required. $\qed$


\section {A clique minus a clique - convexity is everything}
\label{s-clique-minus-clique}

We consider the case where the confusion graph $G_1$ of the first
user is the complete graph on $k$ vertices minus a clique on $d$
vertices. We give an upper bound on the rate of the other user for
both the empty confusion graph and for $\overline{G}_1$. In both
cases, the results imply that optimal encoding can be achieved by
convexity, that is nothing can be gained by encoding the messages
together. In order to prove Theorems \ref{t-7} and \ref{t-8}, we
need the following lemmas whose proofs are provided in
Appendix \ref{a-1}.

\begin{lemma}
\label{l-10} Given $a,b \in \NN$ s.t.  $2 \leq b \leq a$ and $x_1 \geq  x_2 \geq  \cdots \geq x_b
\geq 0$,
$$
(a-b+1) a^{\log_b x_b} + \sum_{i \in [b-1]} a^{\log_b x_i} \leq a^{\log_b \sum_{i \in [b]} x_i}.
$$
\end{lemma}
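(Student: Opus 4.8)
The plan is to recast the inequality in a scale-free form and then reduce it to checking the vertices of a polytope. First I would set $c = \log_b a$ and use the identity $a^{\log_b x} = x^{c}$ (valid since $a = b^{c}$), which turns the claim into
$$
(a-b+1)\, x_b^{c} + \sum_{i \in [b-1]} x_i^{c} \le \Big(\sum_{i \in [b]} x_i\Big)^{c},
$$
where $c \ge 1$ because $a \ge b$. Both sides are homogeneous of degree $c$ in $(x_1, \ldots, x_b)$, so (discarding the trivial all-zero case) I may normalize $\sum_{i \in [b]} x_i = 1$. It then suffices to prove that
$$
f(x) := (a-b+1)\, x_b^{c} + \sum_{i \in [b-1]} x_i^{c} \le 1
$$
on the ordered simplex $P = \{\, x_1 \ge x_2 \ge \cdots \ge x_b \ge 0,\ \sum_{i \in [b]} x_i = 1 \,\}$.

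Next I would exploit convexity. Since $c \ge 1$, the map $t \mapsto t^{c}$ is convex on $[0,\infty)$, and all coefficients of $f$ are positive (note $a-b+1 \ge 1$), so $f$ is a convex function on the compact convex polytope $P$. A convex function attains its maximum over such a polytope at an extreme point, so it is enough to bound $f$ at the vertices of $P$.

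I would then enumerate the vertices of the ordered simplex: these are exactly the points $v^{(j)} = (\tfrac1j, \ldots, \tfrac1j, 0, \ldots, 0)$ with $j$ equal nonzero coordinates, for $j = 1, \ldots, b$, as one sees by checking which of the $b$ inequality constraints $x_1 \ge x_2, \ldots, x_{b-1} \ge x_b,\ x_b \ge 0$ can be made simultaneously tight. Evaluating $f$: for $j < b$ the smallest coordinate $x_b$ vanishes, killing the large-coefficient term and leaving $f(v^{(j)}) = j\cdot(1/j)^{c} = j^{1-c} \le 1$; for $j = b$ all coordinates equal $1/b$, giving $f(v^{(b)}) = (a-b+1 + (b-1))\, b^{-c} = a\, b^{-c} = a/a = 1$. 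Hence $\max_P f = 1$, proving the inequality (with equality exactly at $v^{(1)}$ and $v^{(b)}$).

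I expect the delicate point to be the role of the ordering constraint $x_1 \ge \cdots \ge x_b$, which is what forces the optimum into the right shape: it guarantees that whenever the smallest coordinate $x_b$ is positive---the only regime in which the large coefficient $a-b+1$ is active---all coordinates must be equal, so the dangerous vertex $e_b$ of the \emph{full} simplex, where $f$ would equal $a-b+1 > 1$, is simply not a vertex of $P$. Verifying that the $v^{(j)}$ are precisely the extreme points of $P$, and that the coefficient $a-b+1$ combines with the remaining $b-1$ unit-weight terms to give exactly $a\, b^{-c} = 1$ at the all-equal vertex, is the crux; the surrounding estimates are then routine.
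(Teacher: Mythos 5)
Your proof is correct, but it takes a genuinely different route from the paper's. Both arguments begin from the same observation, namely that $a^{\log_b x} = x^{c}$ with $c = \log_b a \geq 1$ (the paper uses this identity implicitly when differentiating), but from there the paper argues by calculus rather than by convex geometry: it first checks that equality holds at the all-equal point $x_1 = \cdots = x_b$, and then notes that for each $i \in [b-1]$ the partial derivative of the right-hand side, $c \bigl(\sum_{j} x_j\bigr)^{c-1}$, dominates the partial derivative $c\, x_i^{c-1}$ of the left-hand side; moving from the point $(x_b, \ldots, x_b)$ to $(x_1, \ldots, x_{b-1}, x_b)$ by increasing the first $b-1$ coordinates (legitimate precisely because $x_b$ is the minimum) then preserves the inequality. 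Your argument replaces this monotone-path computation with homogenization plus convexity: normalize onto the ordered simplex, observe that $f$ is convex, and evaluate at the extreme points $v^{(j)}$, which you correctly identify (this is cleanest via the substitution $y_j = x_j - x_{j+1}$). What your route buys is a sharper structural picture: it isolates exactly where the ordering hypothesis enters (it deletes the bad vertex $e_b$ of the full simplex, where $f$ would equal $a-b+1$) and exhibits the extremal configurations, at the cost of needing the vertex enumeration and the extreme-point principle; the paper's derivative argument is terser but leaves the path-monotonicity step implicit. One small caveat: your parenthetical claim that equality holds exactly at $v^{(1)}$ and $v^{(b)}$ fails in the degenerate case $a = b$, where $c = 1$ and the inequality is an identity, but this does not affect the proof itself.
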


\begin{remark}
Define here  $a^{\log_b 0} = 0$ hence we allow $x_b$ to
be 0.
\end{remark}

\begin{lemma}
\label{l-11} Given $2 \leq d \leq k \in \NN$ and a set $\GG \subseteq \Sigma_k^n$, we define $ \GG'
= \GG \cap \Sigma_d^n$. If $\GG$ is closed under replacing each  $\sigma_i$ with $\sigma_j$ for any
$i > d$ and $j \in [k]$, then either $|\GG| = |\GG'| = 0$ or
$$
\log_k |\GG| \leq \log_d |\GG'|.
$$
\end{lemma}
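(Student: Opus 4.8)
The plan is to prove Lemma \ref{l-11} by induction on the length $n$, peeling off the first coordinate and reducing the count of $\GG$ to the counts of its ``sections''. For the base case $n=0$ (or $n=1$) the statement is immediate: either $\GG$ is empty, or it consists of single letters closed under the replacement, which forces $\GG=\Sigma_k$ and $\GG'=\Sigma_d$ whenever a high letter is present, giving $\log_k|\GG|=1=\log_d|\GG'|$, and otherwise $\GG=\GG'\subseteq\Sigma_d$ with $\log_k t\le\log_d t$. For the inductive step, fix $n\ge 1$ and for each $\ell\in[k]$ define the section $\GG_\ell=\{\,w\in\Sigma_k^{n-1}\mid \sigma_\ell w\in\GG\,\}$ together with $\GG_\ell'=\GG_\ell\cap\Sigma_d^{n-1}$. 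Then $|\GG|=\sum_{\ell=1}^{k}|\GG_\ell|$ and, since a string of $\Sigma_d^n$ has its first letter in $\Sigma_d$ and its tail in $\Sigma_d^{n-1}$, also $|\GG'|=\sum_{\ell=1}^{d}|\GG_\ell'|$.

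The replacement hypothesis is used in two distinct ways. First, applied to positions $2,\ldots,n$, it shows that each section $\GG_\ell$ is itself closed under replacing a high letter by an arbitrary letter, so the induction hypothesis applies to every $\GG_\ell$ and yields $|\GG_\ell|\le k^{\log_d|\GG_\ell'|}$ (interpreting $k^{\log_d 0}=0$, which also absorbs the ``either $=0$'' branch, since $|\GG_\ell'|=0$ forces $|\GG_\ell|=0$). Second, applied to the first coordinate, it gives for every high index $i>d$ and every $j\in[k]$ the inclusion $\GG_i\subseteq\GG_j$; consequently all high-indexed sections coincide and each high section is contained in every low section, whence $|\GG_i'|\le\min_{j\le d}|\GG_j'|$ for all $i>d$.

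Combining these, I would order the low sections so that, writing $x_\ell=|\GG_\ell'|$, we have $x_1\ge x_2\ge\cdots\ge x_d\ge 0$. Using the induction bound termwise and splitting off the $k-d$ high sections,
\[
|\GG|\ \le\ \sum_{\ell=1}^{k}k^{\log_d|\GG_\ell'|}\ =\ \sum_{\ell=1}^{d}k^{\log_d x_\ell}+\sum_{\ell=d+1}^{k}k^{\log_d|\GG_\ell'|}.
\]
Since $t\mapsto k^{\log_d t}=t^{\log_d k}$ is nondecreasing for $t\ge 0$ (as $\log_d k\ge 1$) and $|\GG_\ell'|\le x_d$ for each high $\ell$, each of the $k-d$ high terms is at most $k^{\log_d x_d}$, so the right-hand side is bounded by $\sum_{\ell=1}^{d-1}k^{\log_d x_\ell}+(k-d+1)k^{\log_d x_d}$. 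This is exactly the left-hand side of Lemma \ref{l-10} with $a=k$ and $b=d$, so that lemma gives $|\GG|\le k^{\log_d\sum_{\ell=1}^{d}x_\ell}=k^{\log_d|\GG'|}$, which upon taking $\log_k$ is the desired inequality.

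The main obstacle is arranging the bookkeeping so that the combinatorial output matches Lemma \ref{l-10} precisely: the closure property must be shown to force the high sections to be the \emph{smallest} of the $\GG_\ell'$, so that after the monotonicity step the minimal value $x_d$ is collected with the exact multiplicity $k-d+1$ demanded by Lemma \ref{l-10}. The remaining care is purely in the degenerate cases, where one checks that the convention $k^{\log_d 0}=0$ keeps the induction hypothesis $|\GG_\ell|\le k^{\log_d|\GG_\ell'|}$ valid and that $|\GG|=0$ collapses to the trivial branch $|\GG|=|\GG'|=0$.
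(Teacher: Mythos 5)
Your proposal is correct and follows essentially the same route as the paper's own proof: induction on $n$, decomposition of $\GG$ into sections by a single coordinate (you peel off the first letter, the paper peels off the last, an immaterial difference), use of the closure hypothesis to show every high-indexed section's restriction is at most the smallest low-indexed one, and then an application of Lemma \ref{l-10} with $a=k$, $b=d$ to the resulting sum. Your explicit handling of the degenerate branch via $k^{\log_d 0}=0$ and the observation that $|\GG_\ell'|=0$ forces $|\GG_\ell|=0$ matches what the paper leaves implicit.
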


\begin{lemma}
\label{l-12} Given $2 \leq d \leq k \in \NN$ s.t. $d \leq \frac {k+1} {2}$ and a set $\GG \subseteq
\Sigma_k^n$, define $ \GG' = \GG \cap \Sigma_d^n$ and $\GG'' = \{ f(g_1)f(g_2)\cdots f(g_n) \mid g
\in \GG \} $ where $f(\sigma_i) = \sigma_{\max\{i, d\}}$. If $\GG$ is closed under replacing each
$\sigma_i$ with $\sigma_j$ for any $i > d$ and $j \in [k]$, then either $|\GG'| = |\GG''| = 0$ or
$$
\log_{k-d+1} |\GG''| \leq \log_d |\GG'|.
$$
\end{lemma}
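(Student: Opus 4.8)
The plan is to prove Lemma~\ref{l-12} by induction on the word length $n$, peeling off the first coordinate exactly as in the proof of Lemma~\ref{l-11}, but grouping the first-coordinate classes so that the resulting estimate matches Lemma~\ref{l-10} with the parameters $a=k-d+1$ and $b=d$. For $j\in[k]$ write $B_j=\{h\in\Sigma_k^{\,n-1}\mid \sigma_j h\in\GG\}$ for the tail set of words beginning with $\sigma_j$; each $B_j$ inherits the closure property of $\GG$. Applying closure to the first coordinate shows that $B_{d+1}=\cdots=B_k=:B_\ast$ and that $B_\ast\subseteq B_j$ for every $j$. Set $B_j'=B_j\cap\Sigma_d^{\,n-1}$, $u_j=|f(B_j)|$ and $u_\ast=|f(B_\ast)|$. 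Since $\GG$ is closed one can turn any word of $\GG$ into a word of $\Sigma_d^n$ one letter at a time, so $\GG=\emptyset$ iff $\GG'=\emptyset$; this disposes of the degenerate branch of the dichotomy, and the base case $n\le 1$ is immediate.

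Next I would express both sides through the tails. Splitting $\GG'$ by its (necessarily $\Sigma_d$) first letter gives $|\GG'|=\sum_{j=1}^d|B_j'|$. Splitting $\GG''$ by the first letter of the image, words with first letter $\sigma_d$ come exactly from $g_1\in\Sigma_d$ and contribute the tails $C:=\bigcup_{j\le d}f(B_j)$, while for each $i>d$ the first letter $\sigma_i$ contributes the tails $f(B_\ast)$; hence $|\GG''|=|C|+(k-d)u_\ast$. The crucial step---where the naive estimate $|C|\le\sum_{j\le d}u_j$ is too wasteful---is to use $f(B_\ast)\subseteq f(B_j)$ for every $j\le d$. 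Subtracting the common part once yields
\[
|C|\;\le\;u_\ast+\sum_{j=1}^d\bigl(u_j-u_\ast\bigr)\;=\;\sum_{j=1}^d u_j-(d-1)u_\ast ,
\]
so that $|\GG''|\le\sum_{j=1}^d u_j+(k-2d+1)u_\ast$.

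Now I would feed in the induction hypothesis applied to each closed set $B_j$ and to $B_\ast$, namely $u_j\le|B_j'|^{c}$ and $u_\ast\le|B_\ast'|^{c}$ with $c=\log_d(k-d+1)$. Ordering $x_1\ge\cdots\ge x_d$ to be the values $|B_1'|,\dots,|B_d'|$ and using $B_\ast'\subseteq B_j'$ (so $|B_\ast'|\le x_d$) together with $k-2d+1\ge 0$, this gives
\[
|\GG''|\;\le\;\sum_{i=1}^{d-1}x_i^{\,c}+(k-2d+2)\,x_d^{\,c}.
\]
Here $k-2d+2=(k-d+1)-d+1$ is precisely the coefficient $a-b+1$ in Lemma~\ref{l-10} for $a=k-d+1$, $b=d$, and $c=\log_b a$; the requirement $b\le a$ is exactly the hypothesis $d\le(k+1)/2$ (equivalently $k-2d+1\ge0$). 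Lemma~\ref{l-10} then bounds the right-hand side by $\bigl(\sum_{i=1}^d x_i\bigr)^{c}=|\GG'|^{c}$, i.e.\ $|\GG''|\le|\GG'|^{\log_d(k-d+1)}$, which is the claimed $\log_{k-d+1}|\GG''|\le\log_d|\GG'|$.

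I expect the main obstacle to be getting this last coefficient right: the collapse of the $d$ classes $\sigma_1,\dots,\sigma_d$ onto the single image letter $\sigma_d$ (handled by the union bound \emph{after} subtracting $f(B_\ast)$) and the $(k-d)$-fold repetition of the smallest tail $f(B_\ast)$ must be accounted for simultaneously, since treating them separately either loses a multiplicative factor $k-d+1$ or produces the wrong coefficient on $x_d$. The trade-off that makes the clean bound possible is that $B_\ast'\subseteq\bigcap_{j\le d}B_j'$: whenever the $B_j'$ are nearly disjoint (making $\sum_j|B_j'|$ large) the term $u_\ast$ is forced to be small---precisely the superadditive tension encoded by Lemma~\ref{l-10}.
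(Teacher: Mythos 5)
Your proof is correct and follows essentially the same route as the paper's: induction on $n$ peeling off one coordinate, bounding the union of the $d$ collapsed classes by subtracting the common image ($f(B_\ast)$ in your notation, $\cap_{j\in[d]}\GG_j''$ in the paper's) $d-1$ times, then applying the induction hypothesis and Lemma \ref{l-10} with $a=k-d+1$, $b=d$ to get the coefficient $k-2d+2$ on the minimal class. The only differences are cosmetic: you peel the first coordinate rather than the last, and you exploit the equality $B_{d+1}=\cdots=B_k=B_\ast$ directly where the paper passes through a $\max$/averaging step over the classes indexed by $i>d$.
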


\begin{remark} The restriction of $d$ is required as one can easily find an example where $d=\lceil \frac
{k+1} {2} \rceil$ for which the lemma does not hold (such examples
are given in Appendix \ref{a-2}).
\end{remark}

\begin{proof}[Proof of Theorem \ref{t-7}]
Let $G_1, G_2$ be the confusion graphs as defined in the theorem and
assume the rate of the first user is $\alpha \log_2 d$ for some
$\alpha \in [0, 1]$. The messages used can be divided into disjoint
families $\FF_1,\FF_2, \ldots, \FF_{d^{\alpha n}}$ according to the
message for the first user. Since the first user can only
distinguish between the letters $\sigma_i$ for $i \in [d]$, we can
and will assume each such family $\FF_a$ is closed under replacing
$\sigma_i$ with $\sigma_j$ for $i > d$ and $j \in [k]$.

In order to prove this assumption is valid, it suffices to show user
1 can still distinguish between each of the families after these
replacements. Notice that using this assumption might
result in a different scheme, however, this shows that user 1 hasn't
lost anything from this transition, while user 2 could possibly gain
as we increased the size of each family (and he cannot lose since
one could still use only the original families). Let $\GG_a$ denote
the family $\FF_a$ after replacing $\sigma_i$ with $\sigma_j$ for $i
> d$ and $j \in [k]$. Assume by contradiction that there exist two
families $\FF_a, \FF_b$ and two vectors $v_a \in \GG_a, v_b \in
\GG_b$ so that user 1 can distinguish between $\FF_a$ and $ \FF_b$
but cannot distinguish between $v_a$ and $v_b$. By the definition of
$G_1$, for every coordinate $i \in [n]$, either $v_a[i] \not \in
\Sigma_d$ or $v_b[i] \not \in \Sigma_d$ or $v_a[i] = v_b[i] \in
\Sigma_d$ as otherwise user 1 would be able to distinguish between
them (here $v_x[i]$ denotes the $i$'th letter in the vector $v_x$).
Since $v_a \in \GG_a$ and $v_b \in \GG_b$, we know there exists $u_a
\in \FF_a$ and $u_b \in \FF_b$ from which $v_a$ and $v_b$ can be
derived by the replacements above. Therefore, for every coordinate
$i \in [n]$, either $u_a[i] \not \in \Sigma_d$ or $u_b[i] \not \in
\Sigma_d$ or $u_a[i] = u_b[i] \in \Sigma_d$. However, this is in
contradiction to the fact that user 1 was able to distinguish
between $\FF_a$ and $\FF_b$ as he cannot distinguish between $u_a$
and $u_b$.

Define $\FF_a' = \FF_a \cap\Sigma_d^n$ for every $\FF_a$. Since
these families are pairwise disjoint, by an averaging argument there
exists some message $a$ for which $|\FF_a'| \leq d^{-\alpha n} \cdot
d^n = d^{(1-\alpha)n}$. By Lemma \ref{l-11}, for this specific
message $a$, $|\FF_a| \leq k^{(1-\alpha)n}$ which implies the rate
of the second user is at most $(1-\alpha)\log_2 k$.
\end{proof}

\begin{coro}
\label{c-13} For the confusion graph $G_1$ as above and any confusion graphs $G_2, \ldots, G_r$, a
feasible rate vector $(\alpha \log_2 d, R_2, \ldots, R_r)$ for $\alpha \in [0,1]$ must satisfy
$\sum_{i=2}^{r} R_i \leq (1-\alpha) \log_2 k$.
\end{coro}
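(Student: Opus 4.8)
The plan is to reduce the statement to the two-user upper bound already established inside the proof of Theorem~\ref{t-7}. The guiding idea is that the combined rate $\sum_{i=2}^{r} R_i$ of users $2,\dots,r$ can only increase if we hand all of their information to a single ``super-user'', and that the most generous such super-user is one with the empty confusion graph, i.e.\ one who can tell every two letters apart. Since the empty graph is precisely the second confusion graph treated in Theorem~\ref{t-7}, the bound there will transfer immediately.

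Concretely, suppose $(\alpha \log_2 d, R_2, \dots, R_r)$ is feasible, witnessed by a sequence of length-$n$ schemes realizing $(m_1, m_2, \dots, m_r)$ messages with $\log_2 m_1 / n \to \alpha \log_2 d$ and $\log_2 m_i / n \to R_i$ for $i \geq 2$. Fixing one such scheme, I would introduce a single user $2'$ whose message is the full tuple $(a_2, \dots, a_r)$, so that user $2'$ must distinguish $M := \prod_{i=2}^{r} m_i$ messages, and whose confusion graph is empty. Because the scheme is deterministic, the transmitted codeword determines the tuple $(a_2, \dots, a_r)$; hence a user who observes the exact codeword --- which is what the empty (noiseless) graph provides --- can always recover it. Grouping the original codewords according to the value of this tuple therefore yields a valid two-user scheme for user $1$ (with graph $G_1$) and user $2'$ (with the empty graph), realizing $(m_1, M)$ messages.

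Passing to the limit along the witnessing sequence, and using $\log_2 M / n = \sum_{i=2}^{r} \log_2 m_i / n \to \sum_{i=2}^{r} R_i$, the two-user rate vector $(\alpha \log_2 d, \sum_{i=2}^{r} R_i)$ is feasible for exactly the configuration of Theorem~\ref{t-7}. The argument in that proof shows that whenever user $1$'s rate equals $\alpha \log_2 d$, the second user's rate is at most $(1-\alpha)\log_2 k$; applying this verbatim to the pair $(1, 2')$ gives $\sum_{i=2}^{r} R_i \leq (1-\alpha)\log_2 k$, as required.

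The only step that genuinely needs justification is the reduction itself, and I expect it to be the main (if mild) obstacle: one must verify that collapsing users $2,\dots,r$ into a single empty-graph user neither loses any of the rate $\sum_{i=2}^{r} R_i$ nor produces an invalid instance. Intuitively this is clear, since the true joint confusion graph of users $2,\dots,r$ is $\cap_{i=2}^{r} G_i$, which contains the empty graph as a subgraph, and a user confusing fewer pairs of letters is strictly more powerful; hence any upper bound proven for the empty-graph user is also an upper bound for their genuine combination. Once the reduction is secured, no further counting or analytic work is needed --- the entire weight of the proof is borne by Theorem~\ref{t-7}, and this is why the corollary follows so quickly.
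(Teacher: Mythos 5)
Your proof is correct and is essentially the paper's intended argument: the paper states Corollary~\ref{c-13} without a separate proof precisely because the proof of Theorem~\ref{t-7} already bounds the number of codewords available once user~1's message is fixed, and your reduction---collapsing users $2,\dots,r$ into a single noiseless super-user, justified by the injectivity of the encoding map---is the clean way of making that implication explicit.
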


\begin{proof}[Proof of Theorem \ref{t-8}]
Let $G_1, G_2$ be the confusion graphs as defined
in the theorem. Note that the Shannon capacity $c(G_2)$ is precisely
$k-d+1$, hence these rate vectors are feasible by Corollary
\ref{c-4} (as is also easy to see directly). Assume the rate of the
first user is $\alpha \log_2 d$ for some $\alpha \in [0, 1]$. The
messages used can be divided into disjoint families $\FF_1,\FF_2,
\ldots, \FF_{d^{\alpha n}}$ according to the message for the first
user. Since the first user can only distinguish between the letters
$\sigma_i$ for $i \in [d]$, we can and will assume, as in the proof
of Theorem \ref{t-7}, that each such family $\FF_a$ is closed under
replacing $\sigma_i$ with $\sigma_j$ for $i > d$ and $j \in [k]$.

Define $\FF_a' = \FF_a \cap \Sigma_d^n$ for every $\FF_a$. Since these families are pairwise
disjoint, by an averaging argument there exists some message $a$ for which $|\FF_a'| \leq
d^{-\alpha n} \cdot d^n = d^{(1-\alpha)n}$. Given the first user should receive the message $a$,
the second user has at most $|\FF_a''| = |\{   f(g_1)f(g_2)\cdots f(g_n) \mid g \in \FF_a\} |$
different messages where $f(\sigma_i) = \sigma_{\max\{i, d\}}$ (as the second user can only
distinguish the locations of $\sigma_i$ for $i \in [k] - [d]$ and all other letters are
indistinguishable for him). By Lemma \ref{l-12}, for this specific message $a$, $|\FF_a''| \leq
(k-d+1)^{(1-\alpha)n}$ which implies the rate of the second user
 is at most $(1-\alpha)\log_2 (k-d+1)$.
\end{proof}

\begin{coro}
\label{c-14} For the confusion graph $G_1$ as above and any confusion graphs $G_2, \ldots, G_r
\supseteq   \overline{G}_1 $, a feasible rate vector $(\alpha \log_2 d, R_2, \ldots, R_r)$ for
$\alpha \in [0,1]$ must satisfy $\sum_{i=2}^{r} R_i \leq (1-\alpha) \log_2 (k-d+1)$.
\end{coro}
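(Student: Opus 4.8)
The plan is to reduce the $r$-user bound to the two-user Theorem~\ref{t-8} by treating users $2,\dots,r$ as a single \emph{super-user} whose confusion graph is the intersection $G^{\ast}=\bigcap_{i=2}^{r}G_i$. By hypothesis every $G_i$ with $i\ge 2$ contains $\overline{G}_1$, so their intersection does too, i.e.\ $G^{\ast}\supseteq\overline{G}_1$; in particular $G^{\ast}$ confuses every pair of letters inside $\Sigma_d$. Since the argument will invoke Lemma~\ref{l-12}, it inherits the restriction $d\le\frac{k+1}{2}$ of Theorem~\ref{t-8}.

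Fix a valid length-$n$ scheme realising $(m_1,\dots,m_r)$ messages and assume, as in the theorem, that user~1 has rate $\alpha\log_2 d$. Partition the codewords into families $\FF_1,\dots,\FF_{d^{\alpha n}}$ by the message of user~1 and---exactly as in the proof of Theorem~\ref{t-8}, an argument that depends on $G_1$ alone and therefore transfers unchanged---assume each $\FF_a$ is closed under replacing any $\sigma_i$ with $i>d$ by an arbitrary $\sigma_j$. Denote the closed families by $\GG_a$; they remain pairwise disjoint. For a fixed message $a$ of user~1 the scheme must supply a codeword for every tuple $(j_2,\dots,j_r)\in[m_2]\times\cdots\times[m_r]$, and any two codewords carrying distinct tuples are separated by some user $i\ge 2$, hence by $G^{\ast}$ (a pair split by one $G_i$ is a non-edge of the intersection). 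Thus $\GG_a$ contains at least $\prod_{i=2}^{r}m_i$ pairwise $G^{\ast}$-distinguishable codewords.

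The crux is a monotonicity in the number of edges. Since $G^{\ast}\supseteq\overline{G}_1$, any pair separated by $G^{\ast}$ is also separated by $\overline{G}_1$, so a set of pairwise $G^{\ast}$-distinguishable codewords is also pairwise $\overline{G}_1$-distinguishable. Because $\overline{G}_1$ is the disjoint union of the clique $\Sigma_d$ and the singletons $\sigma_{d+1},\dots,\sigma_k$, two codewords are $\overline{G}_1$-distinguishable precisely when they differ under $f(\sigma_i)=\sigma_{\max\{i,d\}}$, so the maximum number of pairwise $\overline{G}_1$-distinguishable codewords in $\GG_a$ equals $|\GG_a''|$ with $\GG_a''=\{f(g_1)\cdots f(g_n)\mid g\in\GG_a\}$. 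Combining with the previous paragraph, $\prod_{i=2}^{r}m_i\le|\GG_a''|$.

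It then remains to bound $|\GG_a''|$, which is where the two-user machinery finishes the job. With $\GG_a'=\GG_a\cap\Sigma_d^n$, an averaging argument over the $d^{\alpha n}$ disjoint families produces some $a$ with $|\GG_a'|\le d^{(1-\alpha)n}$, and Lemma~\ref{l-12} yields $|\GG_a''|\le(k-d+1)^{(1-\alpha)n}$. Hence $\prod_{i=2}^{r}m_i\le(k-d+1)^{(1-\alpha)n}$, and taking logarithms and letting $n\to\infty$ gives $\sum_{i=2}^{r}R_i\le(1-\alpha)\log_2(k-d+1)$, as claimed. The only ingredient beyond Theorem~\ref{t-8} is the super-user reduction together with this edge-monotonicity of the distinguishable-message count; the main thing to watch is the direction of the inequality---\emph{more} edges in $G^{\ast}$ mean \emph{fewer} distinguishable messages---after which the computation is identical to the two-user case and I foresee no real obstacle.
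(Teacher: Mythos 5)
Your proof is correct and follows essentially the same route as the paper: Corollary~\ref{c-14} is stated there without a separate proof precisely because it is the implicit ``super-user'' extension of the proof of Theorem~\ref{t-8}, which is what you have written out in full. Your reduction of users $2,\dots,r$ to a single user with confusion graph $\bigcap_{i=2}^{r}G_i\supseteq\overline{G}_1$, followed by the closure, averaging, and Lemma~\ref{l-12} steps (with the edge-monotonicity direction handled correctly), is exactly the intended argument.
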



\section {Two users, three letters - the complete story}
\label{s-users2-alphabet3}

Two users and three letters is the smallest possible example of non-trivial scenario. Having only
two letters result in each user either knowing everything or knowing nothing and obviously having a
single user coincides with the Shannon capacity question. These smallest scenarios however already
contain some interesting cases which we analyze using the previous results.

Let $\Sigma = \{\sigma_0, \sigma_1, \sigma_2\}$ be our alphabet and let $G_1, G_2$ be the confusion
graphs of the two users correspondingly. If one of the confusion graphs is the complete graph,
again it coincides with the Shannon capacity of a single graph (for the non-complete confusion
graph). As stated earlier, there is a strong connection between the feasible rate vectors and the
Shannon capacity of graphs. In the cases we are about to analyze, we use the fact that the Shannon
capacity of every graph on 3 vertices which is neither the empty graph nor the clique, is precisely
2 (each such graph is perfect, hence its Shannon capacity equals its independence number). By the
symmetry between the users and the letters in the alphabet, it suffices to discuss only subset of
the possible confusion graphs.

\subsection {Confusion graph with two edges}
In this subsection we show that when the first confusion graph has two edges, then no scheme can
outperform what follows from convexity.
\begin{prop}
\label{p-15} Let $G_1 = (\Sigma, \{\sigma_0\sigma_1, \sigma_0\sigma_2\})$, meaning the first user
only distinguishes between the letters $\sigma_1$ and $\sigma_2$. For every $G_2$, the optimal rate
vectors are given by Corollary \ref{c-4}, i.e.
$$
(\alpha \cdot \log_2 c(G_1), (1-\alpha) \cdot \log_2 c(G_2))
$$
for $\alpha \in [0,1]$ where $\log_2 c(G_1) = 1$.
\end{prop}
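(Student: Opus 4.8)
The plan is to show that the feasible region for this pair of users is exactly the triangle bounded by the two axes and the segment produced by Corollary \ref{c-4}, so that its Pareto boundary is precisely the stated family of vectors. Feasibility of $(\alpha\log_2 c(G_1),(1-\alpha)\log_2 c(G_2))$ is already supplied by Corollary \ref{c-4} (using $\log_2 c(G_1)=1$), so the whole task reduces to the matching upper bound. Since $R_1\le\log_2 c(G_1)=1$ by Proposition \ref{p-1}, I may write any feasible first rate as $R_1=\alpha$ with $\alpha\in[0,1]$, and the claim to prove is $R_2\le(1-\alpha)\log_2 c(G_2)$. Once this single linear constraint is established, it forces the frontier to be the optimal set: on a frontier point one cannot raise $R_2$ (it already meets the bound) nor raise $R_1$ (the bound rearranges to $R_1\le 1-R_2/\log_2 c(G_2)=\alpha$), while every interior or lower axis point is clearly not optimal.

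To prove $R_2\le(1-\alpha)\log_2 c(G_2)$ I would split on $G_2$, exploiting the symmetry $\sigma_1\leftrightarrow\sigma_2$ (which fixes $G_1$) and the fact, noted in this section, that on three letters every non-empty non-complete graph is perfect with $c=2$, while the empty graph has $c=3$; the complete graph is the degenerate single-user case and is set aside. When $G_2$ is empty the bound $R_2\le(1-\alpha)\log_2 3$ is exactly Theorem \ref{t-7} (equivalently Corollary \ref{c-13} with $k=3$). When $G_2$ contains the edge $\sigma_1\sigma_2$, i.e. $G_2\supseteq\overline{G}_1$ — this covers $G_2=\overline{G}_1$ and both two-edge graphs passing through $\sigma_1$ or $\sigma_2$ — Corollary \ref{c-14} gives $R_2\le(1-\alpha)\log_2(k-d+1)=(1-\alpha)$ for $k=3,\,d=2$. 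When $G_2=G_1$ the two users share identical information, so Proposition \ref{p-2} yields $R_1+R_2\le\log_2 c(G_1\cap G_2)=\log_2 c(G_1)=1$, i.e. $R_2\le(1-\alpha)$.

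The one configuration not reached by the earlier results is $G_2=\{\sigma_0\sigma_1\}$ (and, by symmetry, $\{\sigma_0\sigma_2\}$), where user 2 only distinguishes $\sigma_2$ from $\{\sigma_0,\sigma_1\}$; this is the step I expect to be the main obstacle, though it turns out to be elementary. Here I would reuse the mechanism from the proof of Theorem \ref{t-7}: fix user 1's rate at $\alpha$, partition the codewords into the $2^{\alpha n}$ families $\FF_1,\dots,\FF_{2^{\alpha n}}$ indexed by user 1's message, and — by the same argument, which concerns only user 1 — assume each family $\FF_a$ is closed under replacing an occurrence of $\sigma_0$ by an arbitrary letter. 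Writing $\FF_a'=\FF_a\cap\{\sigma_1,\sigma_2\}^n$, disjointness and averaging give some $a$ with $|\FF_a'|\le 2^{(1-\alpha)n}$. The key point is that user 2's received word is precisely the indicator of the positions holding $\sigma_2$; replacing $\sigma_0$ by $\sigma_1$ leaves this indicator unchanged and lands inside $\FF_a'$, so every word user 2 can observe within $\FF_a$ is already observed within $\FF_a'$. As a string of $\FF_a'\subseteq\{\sigma_1,\sigma_2\}^n$ is determined by its $\sigma_2$-indicator, user 2 has at most $|\FF_a'|\le 2^{(1-\alpha)n}$ distinguishable messages, giving $R_2\le(1-\alpha)=(1-\alpha)\log_2 c(G_2)$. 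The only subtlety worth checking is that the closure step never invalidates the upper bound: since closure only enlarges the families it can only increase the number of distinguishable messages for user 2, so a bound on the closed scheme bounds the original, completing the case analysis and hence the proof.
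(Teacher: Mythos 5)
Your proof is correct, and it covers every choice of $G_2$, but it reaches the goal by a more laborious route than the paper does. The paper's proof splits on a single dichotomy: whether $G_1$ and $G_2$ share an edge. If they do, then $G_1\cap G_2$ is a non-empty, non-complete graph on three vertices, so $c(G_1\cap G_2)=2$ and Proposition \ref{p-2} immediately bounds the total rate by $1$, which is exactly what the convexity segment achieves (since $c(G_2)=2$ for every such $G_2$); if they share no edge, then $G_2$ is either empty or equal to $\overline{G}_1$, and these are precisely Theorems \ref{t-7} and \ref{t-8}. The step you singled out as ``the one configuration not reached by the earlier results,'' namely $G_2=\{\sigma_0\sigma_1\}$, is in fact reached: here $G_1\cap G_2=\{\sigma_0\sigma_1\}$ is a single edge, a perfect graph with capacity $2$, so Proposition \ref{p-2} again gives $R_1+R_2\le 1$ and there is nothing left to prove. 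Your bespoke argument for that case --- closure of the families under replacing $\sigma_0$, averaging to find a small $\FF_a'$, and the observation that user 2's received word is the $\sigma_2$-indicator so every word he can see in $\FF_a$ is already seen in $\FF_a'$ --- is sound (it is essentially the Lemma \ref{l-12}/Theorem \ref{t-8} mechanism specialized to this graph), but it is unnecessary machinery. The remaining differences are cosmetic: you invoke Corollary \ref{c-14} for the two-edge graphs containing $\sigma_1\sigma_2$ where the paper folds them into the Proposition \ref{p-2} case, and both treatments agree on $G_2$ empty and $G_2=\overline{G}_1$. What your route buys is a self-contained illustration of how the closure-plus-averaging technique adapts to other confusion graphs; what the paper's route buys is brevity, by noticing that one application of Proposition \ref{p-2} to the intersection graph disposes of all shared-edge cases at once.
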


\begin{remark} Note that this matches the case of a clique minus a clique for the parameters $k=3$ and
$d=2$ as denoted in previous sections, but here we do not limit the
confusion graph of the second user.
\end{remark}

\begin{proof}
The proof is divided into two parts, according to the intersection
between the edges of $G_1$ and $G_2$. In the first case where there
is a non-empty intersection, the bound given by Proposition
\ref{p-2} yields a maximum total rate of $\log_2 c(G_1 \cap G_2) =
1$. Therefore, one could not hope for finding a feasible rate vector
which is not of this form (assuming $G_2$ is not the complete graph,
these are all optimal rate vectors as they have a total rate of 1).

Let us assume there is no intersection between the two confusion
graphs, meaning either $G_2$ is the empty graph or $G_2 =
\overline{G}_1$. These cases match Theorems \ref{t-7} and \ref{t-8}
respectively which indeed yield the desired result.
\end{proof}

\subsection {The first confusion graph has a single edge}

Throughout this section we denote $H$ as the binary entropy function where given some probability
distribution $p_1, p_2, \ldots, p_k, q$ (where $q = 1-\sum_{i=1}^{k} p_i$),
\begin{eqnarray*}
H(p_1, p_2, \ldots, p_k, q) & = & H(p_1, p_2, \ldots, p_k) \\
& = & -\sum_{i=1}^{k} p_i \log_2 p_i - q \log_2 q.
\end{eqnarray*}

\begin{prop}
\label{p-16} Let $G_1 = (\Sigma, \{\sigma_0\sigma_1\})$ and $G_2 = (\Sigma,\{\sigma_0\sigma_2\})$.
The following rate vectors are optimal:
 \begin{itemize}
  \item $(R_1, H(R_1))$ for $R_1 \in [1/2, 2/3]$.
  \item $(R_1, \log_2 3 - R_1)$ for $R_1 \in [2/3, \log_2 3 - 2/3]$.
  \item $(H(R_2), R_2)$ for $R_2 \in [1/2, 2/3]$.
 \end{itemize}
\end{prop}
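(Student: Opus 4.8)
The plan is to invoke Theorem \ref{t-5}, since both confusion graphs are disjoint unions of cliques: $G_1$ is the edge $\{\sigma_0,\sigma_1\}$ together with the isolated vertex $\sigma_2$, and $G_2$ is the edge $\{\sigma_0,\sigma_2\}$ with isolated vertex $\sigma_1$. First I would record the three relevant entropies for a distribution $p=(p_0,p_1,p_2)$. User $1$ only learns whether or not the letter is $\sigma_2$, so $H_{(p)}(Y_1)=H(p_2)$; symmetrically $H_{(p)}(Y_2)=H(p_1)$; and since the pair $(Y_1,Y_2)$ reconstructs the transmitted letter exactly, $H_{(p)}(Y_1,Y_2)=H(p_0,p_1,p_2)$. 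By Theorem \ref{t-5}, $(R_1,R_2)$ is therefore feasible if and only if there is a distribution $p$ with $R_1\le H(p_2)$, $R_2\le H(p_1)$ and $R_1+R_2\le H(p_0,p_1,p_2)$.

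Then I would establish feasibility of the three families by exhibiting distributions. The uniform distribution $p=(1/3,1/3,1/3)$ realizes the middle segment: its constraints read $R_1,R_2\le\log_2 3-2/3$ and $R_1+R_2\le\log_2 3$, whose Pareto boundary is exactly $R_1+R_2=\log_2 3$ with $R_1\in[2/3,\log_2 3-2/3]$. The distribution $p=(R_1/2,\,1-R_1,\,R_1/2)$ realizes the first family, since then $H(p_1)=H(1-R_1)=H(R_1)$ and, by the grouping identity $H(p_0,p_1,p_2)=H(p_1)+(1-p_1)H\!\left(\tfrac{p_0}{1-p_1}\right)$, we get $H(p)=H(R_1)+R_1$, while $H(p_2)=H(R_1/2)\ge R_1$ on $[1/2,2/3]$ makes the first constraint slack. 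The third family is the reflection of the first under the symmetry $\sigma_1\leftrightarrow\sigma_2$, which swaps $G_1$ and $G_2$ and hence the roles of $R_1$ and $R_2$.

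For optimality I must show no feasible point lies strictly above these curves. For the middle segment this is immediate from Proposition \ref{p-2}, as $G_1\cap G_2$ is the empty graph on three vertices, giving $R_1+R_2\le\log_2 3$. The crux is the side pieces: I claim that for every distribution $p$ and every $R_1\in[1/2,2/3]$, $\min\{H(p_1),\,H(p)-R_1\}\le H(R_1)$, which by the feasibility characterization forces $R_2\le H(R_1)$. I would argue in two cases. If $H(p_1)\le H(R_1)$ there is nothing to do. Otherwise $H(p_1)>H(R_1)=H(1-R_1)$ forces $p_1\in(1-R_1,R_1)$, and I would bound $H(p)\le H(p_1)+(1-p_1)=:\psi(p_1)$ by maximizing the conditional term. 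The key computation is that $\psi(x)=H(x)+1-x$ is strictly decreasing on $[1/3,1]$, since $\psi'(x)=\log_2\frac{1-x}{x}-1$ vanishes only at $x=1/3$; because $R_1\le 2/3$ gives $1-R_1\ge 1/3$, the whole interval $(1-R_1,R_1)$ lies in $[1/3,1]$, so $\psi(p_1)\le\psi(1-R_1)=R_1+H(R_1)$ and hence $H(p)-R_1\le H(R_1)$. This is exactly where the hypothesis $R_1\le 2/3$ enters: the maximum of $\psi$ sits at $1/3$, so the bound breaks down precisely when $1-R_1<1/3$, which is why the first family terminates at $R_1=2/3$ and hands over to the maximal-total-rate segment.

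Finally, to conclude optimality in the sense of Definition \ref{d-2}, I would note that each claimed curve is strictly decreasing ($H$ is strictly decreasing on $[1/2,1]\supseteq[1/2,2/3]$, and $\log_2 3-R_1$ is decreasing), so attaining the maximal $R_2$ for each fixed $R_1$ (the upper bounds above) simultaneously rules out increasing $R_2$ at fixed $R_1$ and increasing $R_1$ at fixed $R_2$. The main obstacle is the side-piece upper bound through the monotonicity of $\psi$; everything else is routine bookkeeping with Theorem \ref{t-5} and the entropy grouping identity.
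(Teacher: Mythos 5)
Your proposal is correct and takes essentially the same route as the paper's own proof: it invokes Theorem \ref{t-5} with the same entropy computations, establishes feasibility via the uniform distribution and via $p=(R_1/2,\,1-R_1,\,R_1/2)$, and proves optimality of the side pieces through the monotonicity of $\psi(x)=H(x)+1-x$ beyond $x=1/3$, handling the third family by symmetry. If anything, your two-case analysis for the upper bound $R_2\le H(R_1)$ and the closing Pareto-optimality bookkeeping are spelled out a bit more explicitly than the paper's terse version of the same argument.
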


\begin{remark} By the proposition a rate vector $(R_1, R_2)$ in the above case is feasible if and only if
\begin{enumerate}
    \item $R_1 \in [0, 1/2]$ and $R_2 \in [0, 1]$ or
    \item $R_1 \in [1/2, 2/3]$ and $R_2 \in [0, H(R_1)]$ or
    \item $R_1 \in [2/3, \log_2 3 - 2/3]$ and $R_2 \in [0, \log_2 3 - R_1]$ or
    \item $R_1 \in [\log_2 3 - 2/3, 1]$ and $R_2 \in [0, H^{-1}(R_1)]$.
\end{enumerate}
\end{remark}

\begin{proof} The scenario described above is a special case of Theorem \ref{t-5}. Given a probability
distribution $p = (p_0, p_1, p_2)$, $y_1$ is distributed $(p_0+p_1,p_2)$, $y_2$ is distributed
$(p_0 + p_2, p_1)$ and $\{y_1, y_2\}$ is distributed according to $p$.

The uniform distribution $p = (\frac 1 3, \frac 1 3, \frac 1 3)$ yields that the rate vectors
$(R_1, R_2)$ are feasible if $R_1 + R_2 \leq \log_2 3$ and each $R_i \leq H(2/3) = \log_2 3 - 2/3$.
This matches the second case described in the theorem, which is obviously optimal as one
cannot hope to exceed a total rate of $\log_2 3$.

By symmetry, it suffices to analyze the first case of the theorem in order to complete the proof.
Setting $p_1$ to be some probability smaller than half bounds the rate of the second user by $R_2
\leq H(p_1) = H(1-p_1)$. Moreover, the total rate $R_1 + R_2$ is bounded by $H(\frac {1-p_1} {2},
\frac {1-p_1}{2}, p_1) = H(p_1) + (1-p_1)$. This shows that the rate vectors $(R_1, H(R_1))$ are
feasible as $R_1 = 1-p_1 \leq H(\frac {1-p_1} {2})$ for $p_1 \in [0, 1/2]$ (indeed equality holds
for $p_1 = 0$ and since $H'(x) = \log_2(1-x) - \log_2 x < 2$ for $x \in [1/4, 1/2]$, or
equivalently $H'(\frac {1-p_1} {2}) > -1$ for $p_1 \in [0, 1/2]$, this holds for every $p_1 \in [0,
1/2]$ as well). Moreover, these rate vectors are also optimal as the bound for the total rate
$H(p_1) + (1-p_1)$ decreases while $p_1$ increases in the section $[1/3, 1/2]$ (using $H'(x) < 1$
for $x \in [1/3, 1/2]$).
\end{proof}

\begin{remark} Although the rate vector $(1, 1/2)$ is feasible, the vector $(2^n, 2)$ is not
feasible. If user $1$ needs to be able to receive $2^n$ distinct
messages, one of them has to be encoded by $(\sigma_2,\sigma_2,
\cdots ,\sigma_2)$. But this has to be transmitted independently of
the message of the second user, showing there is no
$(2^n,2)$-scheme.
\end{remark}

\begin{prop}
\label{p-17} Let $G_1 = (\Sigma, \{\sigma_0\sigma_1\})$ and $G_2$ be the empty graph. The following
rate vectors are optimal:
 \begin{itemize}
  \item $(R_1, \log_2 3 - R_1)$ for $R_1 \in [0, H(2/3)]$.
  \item $(H(R_2), R_2)$ for $R_2 \in [1/2, 2/3]$.
 \end{itemize}
\end{prop}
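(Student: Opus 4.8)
The plan is to invoke Theorem \ref{t-5} directly, since both confusion graphs are disjoint unions of cliques: $G_1$ has the cliques $\{\sigma_0,\sigma_1\}$ and $\{\sigma_2\}$, while the empty graph $G_2$ is three singleton cliques. Given a distribution $p=(p_0,p_1,p_2)$, I would first record the three joint entropies. The variable $Y_1$ only separates $\sigma_2$ from $\{\sigma_0,\sigma_1\}$, so $H_{(p)}(Y_1)=H(p_2)$; the variable $Y_2$ recovers the full letter, so $H_{(p)}(Y_2)=H(p_0,p_1,p_2)$; and since $Y_2$ already determines $Y_1$, the joint entropy is $H_{(p)}(\{Y_1,Y_2\})=H(p_0,p_1,p_2)$ as well. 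Because $R_1\ge 0$, the constraint for $I=\{2\}$ is implied by that for $I=\{1,2\}$, so Theorem \ref{t-5} reduces to the statement that $(R_1,R_2)$ is feasible if and only if there is a distribution $p$ with $R_1\le H(p_2)$ and $R_1+R_2\le H(p_0,p_1,p_2)$.

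Next I would trace the Pareto frontier by fixing $R_1$ and maximizing $R_2$. For a fixed value $t=p_2$, the chain rule together with concavity of the binary entropy shows $H(p_0,p_1,p_2)$ is maximized by the symmetric choice $p_0=p_1=(1-t)/2$, with value $g(t):=H(t)+(1-t)$, and the side constraint becomes $H(t)\ge R_1$. Thus the best achievable second rate is $R_2=\max_t\{\,g(t):H(t)\ge R_1\,\}-R_1$. A one-variable analysis via $g'(t)=\log_2\frac{1-t}{t}-1$ shows $g$ is unimodal, maximized at $t=1/3$ with $g(1/3)=\log_2 3$. I then split into two regimes according to whether $t=1/3$ is admissible. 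If $R_1\le H(1/3)=H(2/3)=\log_2 3-2/3$, the choice $t=1/3$ satisfies the constraint, yielding total rate $\log_2 3$ and hence $R_2=\log_2 3-R_1$, the first family for $R_1\in[0,H(2/3)]$. If $R_1>H(2/3)$, unimodality pushes the optimum to the admissible boundary nearest $1/3$, namely $t=H^{-1}(R_1)\in(1/3,1/2]$ (the branch below $1/2$), where $R_2=g(t)-R_1=1-t$; writing $R_2=1-t\in[1/2,2/3)$ and using $R_1=H(t)=H(1-t)=H(R_2)$ gives the second family $(H(R_2),R_2)$ for $R_2\in[1/2,2/3]$, which meets the first family at $R_2=2/3$.

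Finally, for optimality in the sense of Definition \ref{d-2}, I would note that the feasible region is downward closed (a scheme can always be restricted to fewer messages), so a feasible point is optimal exactly when it maximizes one coordinate given the other; the frontier computed above is precisely the pointwise maximum of $R_2$ over $R_1$, so each listed vector is optimal. The main obstacle I anticipate is the optimization bookkeeping: correctly reducing the bivariate maximization over $(p_0,p_1)$ to the symmetric choice, verifying the unimodality of $g$, and identifying which branch of $H^{-1}$ applies in the second regime so that the parametrization lands on exactly the stated intervals.
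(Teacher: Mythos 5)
Your proof is correct and arrives at exactly the stated frontier, but by a more self-contained route than the paper's. The paper does not redo any optimization: for feasibility it invokes monotonicity (deleting the edge $\sigma_0\sigma_2$ from the second graph of Proposition \ref{p-16} can only enlarge the feasible region) to import the points $(H(R_2),R_2)$ for $R_2\in[1/2,2/3]$, and then obtains the line $(R_1,\log_2 3 - R_1)$ by time sharing (Proposition \ref{p-3}) between $(H(2/3),2/3)$ and $(0,\log_2 3)$; for optimality it combines the trivial total-rate bound $\log_2 3$ with the entropy bound already derived in the proof of Proposition \ref{p-16}, namely that pinning one user's rate to $H(p)$ forces total rate at most $H(p)+(1-p)$. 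You instead apply Theorem \ref{t-5} from scratch: you correctly identify both graphs as disjoint unions of cliques, reduce the three constraints to $R_1\le H(p_2)$ and $R_1+R_2\le H(p_0,p_1,p_2)$ (the $I=\{2\}$ constraint being redundant), symmetrize to $p_0=p_1$, and maximize $g(t)=H(t)+1-t$ subject to $H(t)\ge R_1$ via unimodality with peak $g(1/3)=\log_2 3$. All of these steps check out, and your argument has the added value of directly producing the full if-and-only-if description recorded in the remark after the proposition, which the paper reaches more indirectly.

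One point to tighten: your closing principle, that in a downward-closed region a feasible point is optimal ``exactly when it maximizes one coordinate given the other,'' is not true in general. Definition \ref{d-2} requires that \emph{neither} coordinate can be increased while the other is held fixed, and a frontier containing a horizontal flat segment would have points that maximize $R_2$ given $R_1$ yet are not optimal. Here the issue disappears because the frontier you computed is strictly decreasing: $R_2=\log_2 3 - R_1$ has slope $-1$ on the first piece, and on the second piece $R_1=H(R_2)$ is strictly decreasing in $R_2$ since $R_2\ge 1/2$. Strict monotonicity of the frontier, together with downward closure, gives maximality in each coordinate given the other; you should state this explicitly rather than rely on the one-coordinate criterion.
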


\begin{remark} By the proposition a rate vector $(R_1, R_2)$ in the above case is feasible if and only if
\begin{enumerate}
    \item $R_1 \in [0, \log_2 3 - 2/3]$ and $R_2 \in [0, \log_2 3 - R_1]$ or
    \item $R_1 \in [\log_2 3 - 2/3, 1]$ and $R_2 \in [0, H^{-1}(R_1)]$.
\end{enumerate}
\end{remark}

\begin{proof}
Our problem is monotone in the following sense. Removing
an edge from one of the confusion graphs can only improve the
feasible rate vectors. Therefore in our case, the lower bound of
$(H(R_2), R_2)$ for $R_2 \in [1/2, 2/3]$ we achieved when both
confusion graphs had a single edge can be applied here. Showing
these rate vectors are also optimal will complete the proof as we
have the trivial upper bound of $\log_2 3$ on the total rate, and by
combining convexity with the fact that the rate vectors
$(H(2/3),2/3)$ and $(0, \log_2 3)$ are feasible, we conclude that
all other required rate vectors are achieved.

Our problem is a special case of Theorem $\ref{t-5}$. Note that in
the proof of Proposition $\ref{p-16}$ we showed an
upper bound for the rates $(R_1, H(R_1))$ which only depended on the
second user. Assuming the second user has rate of $H(R_1)$ already
bounds the total rate by $H(R_1) + R_1$. Similarly in our case,
assuming the first user has rate of $H(R_2)$ for $R_2 \in [1/2,
2/3]$ bounds the total rate of the two users together by $H(R_2) +
R_2$.
\end{proof}


\section {Conclusions and open problems}
\label{s-conclusions}
In this work we have studied the notion of simultaneous communication in a
noisy channel where the channel's noise may differ for each of the users. The goal is to find, for
a given set of confusion graphs which represent the noise for each of the users, which rate vectors
(or alternatively vectors) are feasible. As in the Shannon capacity of a channel, we care about the
average rate per letter when the length of the messages tends to infinity.

Our work demonstrates basic lower and upper bounds for the general case. A simple yet useful tool
in understanding the feasible rate vectors is the convexity property which is described in
Proposition \ref{p-3}. We saw several examples where convexity and basic encoding schemes (derived
from the Shannon capacity of the confusion graphs) are optimal. On the other hand, there are
examples where much more can be gained by mixing the encoding for several users. The case in which
every graph is a disjoint union of cliques is fully understood, and so is the case of 2 users and
alphabet of size 3. Many other cases remain open.

\bigskip\noindent
We conclude with several open problems it would be
interesting to solve.
\begin{itemize}
 \item
The lower and upper bounds for the maximum total rate given in this paper apply combinatorial and
probabilistic techniques. It would be interesting to find stronger bounds which possibly extend the
algebraic and geometric bounds known for the Shannon capacity, such as the bounds given by Lov\'asz
in \cite{L}, Hamers \cite{H} or Alon \cite{A}.

 \item
In the non-symmetric case where we have a user whose confusion graph is a clique over $k$ letters
minus a clique over $d$ letters and the other user's confusion graph is its complement, it would be
interesting to know if Theorem \ref{t-8} still holds for $d > \frac{k+1}{2}$. Since Lemma
\ref{l-12} does not hold for such $d$, a different approach must be used.

 \item
It seems interesting to study graphs $G$ for which the maximum total rate is as small as possible
using $G$ and $\overline{G}$ as the two confusion graphs for two users.
In such a case, the upper bound of the Shannon capacity of the intersection (Proposition \ref{p-2})
does not help as the two graphs are disjoint. However, by Theorem 1.1 of
\cite{A}, we know there exist graphs $G$ on $k$ vertices for which both $c(G)$ and
$c(\overline{G})$ are at most $e^{O(\sqrt{\log k \log \log k})}$.
For such a graph, the upper bound in Proposition \ref{p-1} of the maximum total rate yields
$O(\sqrt{\log k \log \log k})$ which is far less than the trivial upper bound of $\log_2 k$.

 \item
Most of the encoding schemes considered in this paper use randomness and therefore are not given
explicitly. As a result, the encoding and decoding schemes are not efficient. Finding explicit and
efficient encoding and decoding schemes for the scenarios described in the paper remains open.
\end{itemize}

\section*{Acknowledgment}
 I am grateful to Alon Orlitsky and Ofer Shayevitz  for helpful
discussions. I am especially thankful to Noga Alon for his dedication, guidance and support throughout
this research.

\appendix
\section{The extended Janson inequality and its application in Theorem \ref{t-5}}
\label{a-0}

Below is the full citation of the extended Janson inequality,
followed by its application in the lower bound proof of Theorem
\ref{t-5}.

\begin{theo}[Janson]
\label{t-janson} Let $S$ be a set, for each $s \in S$ let $p_s$ be a
real $0 \leq p_s \leq 1$. Let $R$ be a random subset of $S$ obtained
by selecting each $s \in S$, randomly and independently, to lie in
$R$ with probability $p_s$. Let $A_i, i \in I$ be a family of
subsets of $S$. For each $i \in I$, let $B_i$ be the event that $A_i
\subset R$. Let $\mu=\sum_{i \in I} Prob [B_i]$ be the expected
number of events $B_i$ that occur. Define $\Delta=\sum Prob[B_i
\wedge B_j]$, where the sum ranges over all ordered pairs $i,j \in
I, i \neq j$ such that $A_i \cap A_j \neq \emptyset$. Then the
probability that none of the events $B_i$ occurs is at most
$e^{-\mu+\Delta/2}$. If the further assumption that $\Delta > \mu$
holds, this probability can also be bounded by $e^{-\mu^2/2\Delta}$.
\end{theo}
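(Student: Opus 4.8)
The plan is to prove both bounds from scratch, using only the independence of the choices defining $R$ together with the Harris (FKG) correlation inequality for monotone events in a product probability space; the latter is standard and may be found in \cite{AS}. Fix an arbitrary ordering $I = \{1, 2, \ldots, m\}$ of the index set and write
$$
\Pr\Big[\bigwedge_{i \in I} \overline{B_i}\Big] = \prod_{i=1}^{m} \Pr\Big[\overline{B_i} \,\Big|\, \bigwedge_{j < i} \overline{B_j}\Big].
$$
Since each $B_i$ is an increasing event (it asserts $A_i \subseteq R$) and $1 - x \leq e^{-x}$, an upper bound on this product will follow from a suitable \emph{lower} bound on each conditional probability $\Pr[B_i \mid \bigwedge_{j<i}\overline{B_j}]$.

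The central step is to prove, for every $i$, that $\Pr[B_i \mid \bigwedge_{j<i}\overline{B_j}] \geq \Pr[B_i] - \sum_{j<i,\, A_j \cap A_i \neq \emptyset} \Pr[B_i \wedge B_j]$. To do this I would split the prior events into the \emph{dependent} ones, indexed by $D_i = \{j < i : A_j \cap A_i \neq \emptyset\}$, and the \emph{independent} ones, writing $C = \bigwedge_{j \in D_i}\overline{B_j}$ and $E = \bigwedge_{j < i,\, j \notin D_i}\overline{B_j}$. Because $B_i$ depends only on the coordinates in $A_i$, which are disjoint from all $A_j$ with $j \notin D_i$, it is independent of $E$, so $\Pr[B_i \mid E] = \Pr[B_i]$. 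Bounding $\Pr[C \mid E] \leq 1$ and expanding $\overline{C} = \bigvee_{j \in D_i} B_j$ via the union bound gives $\Pr[B_i \mid C \wedge E] \geq \Pr[B_i \mid E] - \sum_{j \in D_i}\Pr[B_i \wedge B_j \mid E]$. The remaining task is to remove the conditioning on $E$: here I invoke Harris's inequality, noting that $B_i \wedge B_j$ is increasing while $E$ is decreasing, so the two are negatively correlated and $\Pr[B_i \wedge B_j \mid E] \leq \Pr[B_i \wedge B_j]$. This is the delicate point and the main obstacle, since it is exactly where the product-measure structure of $R$ is used; everything preceding it is elementary conditioning.

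Granting the key inequality, the first bound is immediate. Summing over $i$ and using that the double sum $\sum_i \sum_{j < i,\, A_i \cap A_j \neq \emptyset} \Pr[B_i \wedge B_j]$ counts each unordered dependent pair once, hence equals $\Delta/2$, we obtain
$$
\Pr\Big[\bigwedge_{i} \overline{B_i}\Big] \leq \exp\Big(-\sum_i \Pr[B_i \mid \textstyle\bigwedge_{j<i}\overline{B_j}]\Big) \leq \exp\big(-\mu + \tfrac{\Delta}{2}\big).
$$

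Finally, for the refined bound under the hypothesis $\Delta > \mu$, I would apply the first bound not to all of $I$ but to a random subfamily. Discarding events only increases the probability that none occurs, so for any $J \subseteq I$ we have $\Pr[\bigwedge_{i \in I}\overline{B_i}] \leq \Pr[\bigwedge_{i \in J}\overline{B_i}] \leq \exp(-\mu_J + \Delta_J/2)$, where $\mu_J$ and $\Delta_J$ are the analogues of $\mu$ and $\Delta$ restricted to $J$. Choosing $J$ by keeping each index independently with probability $q = \mu/\Delta \in (0,1)$ gives $\mathbb{E}[\mu_J] = q\mu$ and $\mathbb{E}[\Delta_J] = q^2\Delta$, so $\mathbb{E}[\mu_J - \Delta_J/2] = q\mu - q^2\Delta/2 = \mu^2/(2\Delta)$. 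Hence some realization of $J$ satisfies $\mu_J - \Delta_J/2 \geq \mu^2/(2\Delta)$, and for that $J$ the first bound yields $\Pr[\bigwedge_i \overline{B_i}] \leq \exp(-\mu^2/(2\Delta))$, completing the proof.
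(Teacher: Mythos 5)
The paper does not prove this theorem at all: it is quoted as a known result (the extended Janson inequality) with a pointer to \cite{AS}, Chapter 8, so there is no in-paper proof to compare against. Your argument is correct, and it is precisely the standard proof from that reference: the chain-rule decomposition over an ordering of $I$, splitting the conditioning into dependent and independent parts, using independence of $B_i$ from the disjointly-supported events and the Harris/FKG inequality (increasing event versus decreasing conditioning) to get $\Pr[B_i \mid \bigwedge_{j<i}\overline{B_j}] \geq \Pr[B_i] - \sum_{j<i,\, A_j\cap A_i\neq\emptyset}\Pr[B_i\wedge B_j]$, which yields $e^{-\mu+\Delta/2}$; and then the random-subfamily sparsification with retention probability $q=\mu/\Delta$ to obtain $e^{-\mu^2/2\Delta}$ when $\Delta>\mu$. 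The only points worth noting are routine: the chain rule requires the usual convention for conditioning on null events (the bound is trivial if some conditioning event has probability zero), and $q>0$ needs $\mu>0$, but the claim is immediate when $\mu=0$.
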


Given the process defined in the lower bound proof of Theorem
\ref{t-5}, we fix messages $j_1, j_2, \ldots, j_r$ for the $r$
users. Consider the $r$-uniform $r$-bipartite hypergraph whose
classes of vertices are $\FF_{i, j_i}$, $1 \leq i \leq r$. Each edge
represents a consistent message, i.e. for every message
$\overline{x} \in \Sigma_k^n$ there exists an edge $\{
\overline{y}_i \in \FF_{i, j_i} \}_{i \in [r]}$ if indeed for every
$i \in [r]$, $\overline{y}_i = f_i(\overline{x}) \in \FF_{i, j_i}$.
Existence of some edge in the hypergraph indicates that this set of
messages can be transmitted as required.

Using the notations of Theorem \ref{t-janson}, our set $S$ is the
union of all possible messages $f_i(\overline{x})$ for each user $i$
and for $\overline{x} \in \Sigma_k^n$ which is distributed according
to $p$. The probability of each element is $1/m_i$ for the relevant
user $i$. The sets $A_i$ represent all the consistent messages
$\{f_i(\overline{x}) \}_{i \in [r]}$ where again, $\overline{x} \in
\Sigma_k^n$ and is distributed according to $p$.

By Theorem \ref{t-janson} the probability that there exists no valid
message as required is at most $e^{-\mu^2/2\Delta}$. One can now
verify that indeed, as defined here, $\mu$ and $\Delta$ satisfy
$e^{-\mu^2/2\Delta} \leq e^{-n^2/2^{k+1}} < \frac {1} {k^n}$. 

\section{Proofs of Lemmas \ref{l-10}, \ref{l-11} and \ref{l-12}}
\label{a-1}

\begin{proof}[Proof of Lemma \ref{l-10}]
When $x_1 = x_2 = \cdots = x_b$ equality holds as
\begin{eqnarray*}
& (a-b+1)a^{\log_b x_b} + \sum_{i \in [b-1]} a^{\log_b x_i} = a
\cdot a^{\log_b x_b} \\ & = a^{1 + \log_b x_b} = a^{\log_b b \cdot
x_b} = a^{\log_b \sum_{i \in [b]} x_i}.
\end{eqnarray*}
In order to complete the proof, it suffices to show that the partial
derivatives $\frac {\partial} {\partial x_i}$ are smaller on the
left hand side than those on the right hand side for any $i \in
[b-1]$, regardless of the values $\{x_i\}$. Given a fixed $i \in
[b-1]$, the derivative of the left hand side is
$\frac{\partial}{\partial x_i} a^{\log_b x_i} =
\frac{\partial}{\partial x_i} x_i^{\log_b a} = \log_b a \cdot x_i
^{\log_b a - 1}$. On the other hand, the derivative of the right
hand side is $ \log_b a \cdot (\sum_{i \in [b]} x_i)^{\log_b a - 1}$
which is at least as big.
\end{proof}

\begin{proof}[Proof of Lemma \ref{l-11}]
We apply induction on $n$. For $n=1$, if $\GG' = \GG$ the inequality
holds as $k \geq d$ (or both sets are empty). Otherwise there exists
$\sigma_i \in \GG$ for $i > d$, hence $|\GG| = |\Sigma| = k$ and
$|\GG'| = d$ for which equality holds.

Assuming the lemma holds for any $n' < n$ we prove it for $n$.
Define $\GG_i = \{ g_1g_2\ldots g_{n-1} \mid g \in \GG \wedge g_{n}
= \sigma_i \}$ and $\GG_i' = \GG_i \cap \Sigma_d^{n-1}$ for every $i
\in [k]$. Note that $\GG_i' \subseteq \GG_j'$ and hence $|\GG_i'|
\leq |\GG_j'|$ for every $i > d$ and $j \in [d]$ (since $\GG$ is
closed under replacing $\sigma_i$ with $\sigma_j$ for $i > d$ and $j
\in [k]$). In particular, this is true for $m \in [d]$ so that
$|\GG_m'| = \min_{j \in [d]} |\GG_j'|$. Therefore, by the induction
hypothesis and Lemma \ref{l-10} with $a = k$ and $b = d$,
\begin{eqnarray*}
 |\GG|  & = &  \sum_{i \in [k]} |\GG_i| \\
  & \leq &  \sum_{i \in [k]} k^{\log_d |\GG_i'|}  \\
  & \leq & (k-d) k^{\log_d |\GG_m'|} + \sum_{i\in[d]} k^{\log_d |\GG_i'|}  \\
  & \leq & (k-d+1) k^{\log_d |\GG_m'|} + \sum_{i \in [d] - \{m\}} k^{\log_d |\GG_i'|} \\
  & \leq & k^{\log_d \sum_{i \in [d]} |\GG_i'|} = k^{\log_d |\GG'|}
\end{eqnarray*}
completing the proof.
\end{proof}

\begin{proof}[Proof of Lemma \ref{l-12}]
Again we apply induction on $n$. For $n=1$, if $\GG' = \GG$ we have
no $\sigma_i \in \GG$ for $i > d$ and the inequality holds as
$|\GG''| = 1$ or both sets are empty. Otherwise there exists
$\sigma_i \in \GG$ for $i > d$, hence $|\GG''| = k - d + 1$ and
$|\GG'| = d$ for which equality holds.

Assuming the lemma holds for any $n' < n$ we prove it for $n$.
Extending the previous definitions of $\GG_i$ and $\GG_i'$, let
$\GG_i'' = \{ f(g_1)f(g_2)\cdots f(g_{n-1}) \mid g \in \GG_i \}$ for
every $i \in [k]$. Note that $\GG_i'' \subseteq \GG_j'' $ and hence
$|\GG_i''| \leq |\GG_j''|$ for every $i > d$ and $j \in [k]$
 (since $\GG$ is closed under replacing $\sigma_i$ with $\sigma_j$ for $i > d$ and $j \in [k]$).
Similarly, $|\GG_i''| \leq |\cap_{j \in [d]} \GG_j'' |$ for all
$i>d$. Therefore,
 \begin{eqnarray*}
 | \cup_{j\in[d]} \GG_j'' |
  & \leq & \sum_{j \in [d] } |\GG_j''|  - (d-1) \cdot |\cap_{j \in [d]} \GG_j'' | \\
  & \leq & \sum_{j \in [d] } |\GG_j''|  - (d-1) \cdot \max_{i \in [k]-[d]} |\GG_i'' | \\
  & \leq & \sum_{j \in [d] } |\GG_j''| -\frac {d-1} {k-d} \sum_{i \in [k]-[d] } |\GG_i''|.
\end{eqnarray*}
As before, $|\GG_i'| \leq |\GG_m'|$ for every $i > d$ and $m \in
[d]$ so that $|\GG_m'| = \min_{j \in [d]} |\GG_j'|$. Therefore, by
the induction hypothesis
\begin{eqnarray*}
 \sum_{i \in [k]-[d] } |\GG_i''|
    & \leq & \sum_{i \in [k]-[d] } (k-d+1)^{\log_d |\GG_i'|} \\
    & \leq & (k-d) (k-d+1)^{\log_d |\GG_m'|}.
\end{eqnarray*}
By Lemma \ref{l-10} with $a = k-d+1$ and $b = d$ (which indeed
satisfies $2 \leq b \leq a$ as $d \leq (k+1)/2$),
\begin{eqnarray*}
 |\GG''|  & = &  \sum_{i \in [k] - [d]} |\GG_i''| + | \cup_{j\in[d]} \GG_j'' | \\
  & \leq & \frac {(k-d)  - (d-1)} {k-d} \sum_{i \in [k]-[d] } |\GG_i''| +  \sum_{j \in [d] } |\GG_j''|  \\
  & \leq &  \frac {k-2d+1} {k-d}  (k-d) (k-d+1)^{\log_d |\GG_m'|}  \\
  & &  + \sum_{j\in[d]} (k-d+1)^{\log_d |\GG_j'|} \\
  & = &  (k-2d+2)(k-d+1)^{\log_d |\GG_m'|}   \\
  & &  + \sum_{j\in[d] - \{m\}} (k-d+1)^{\log_d |\GG_j'|}  \\
  & \leq &  (k-d+1)^{\log_d \sum_{j \in [d]} |\GG_j'|} = (k-d+1)^{\log_d |\GG'|}
\end{eqnarray*}
completing the proof.
\end{proof}

\section{An example in which Lemma \ref{l-12} does not hold when $d > \frac {k+1}{2}$}
\label{a-2} Let $d = 3$, $k = 4$ and $n = 2$ where indeed $d > \frac
{k+1}{2} = 2.5$. Define
\begin{eqnarray*}
\GG & = & \{(\sigma_1, \sigma_1), (\sigma_1, \sigma_2), (\sigma_1, \sigma_3), (\sigma_1, \sigma_4), \\
  & & \: \ (\sigma_2, \sigma_1), (\sigma_3, \sigma_1), (\sigma_4, \sigma_1) \}
\end{eqnarray*}
which can also be viewed as $\{ (\sigma_1, \sigma_4),
(\sigma_4, \sigma_1) \}$ after replacing $\sigma_4$ with every
$\sigma \in \Sigma_4$ (as $\GG$ has to be closed under these
replacements). By the definitions of the lemma,
\begin{eqnarray*}
\GG' & = & \{
  (\sigma_1, \sigma_1), (\sigma_1, \sigma_2), (\sigma_1, \sigma_3),
  (\sigma_2, \sigma_1), (\sigma_3, \sigma_1) \} , \\
\GG'' & = & \{ (\sigma_3, \sigma_3), (\sigma_3, \sigma_4), (\sigma_4, \sigma_3) \}
\end{eqnarray*}
and therefore the lemma does not hold as
$$
\log_{k-d+1} | \GG'' | = \log_2 3 > \log_3 5 = \log_{d} | \GG'|.
$$

The same example can be used with larger parameters, for instance with $k = 100$ and $d=51$
(which is the minimal $d$ for which $d > \frac {k+1}{2}$). With these parameters,
$|\GG'| = 2d -1 = 101$ and $|\GG''| = 2(k-d+1) - 1 = 99$ and indeed
$\log_{k-d+1} | \GG'' |  = \log_{50} 99 > \log_{51} 101 =  \log_{d} | \GG'| $.


\begin{thebibliography}{99}

\bibitem{A}
N. Alon, {\em The Shannon capacity of a union}, Combinatorica 18 (1998), 301-310.

\bibitem{AS}
N. Alon and J. Spencer,
{\em The Probabilistic Method, Third Edition}, Wiley 2008.

\bibitem{Cover}
T. M. Cover, {\em Comments on broadcast channels}, IEEE Trans. Inform. Theory, vol. 44, pp.
2524-2530, Oct. 1998.

\bibitem{CK}
I. Csisz\'ar and J. K\"orner,
{\em Information Theory: Coding Theorems for Discrete Memoryless Systems},
Academic Press 1981.

\bibitem{GKV}
L. Gargano, J. K\"orner and U. Vaccaro,
{\em Capacities: From Information Theory to Extremal Set Theory},
Journal of Combinatorial Theory, Series A 68, 296-316 (1994). 

\bibitem{H}
W. Haemers,
{\em An upper bound for the Shannon capacity of a graph},
Colloq. Math. Soc. J\'anos
Bolyai 25, Algebraic Methods in Graph Theory, Szeged, Hungary (1978), 267Ð272.

\bibitem{L}
L. Lov\'asz,
{\em On the Shannon capacity of a graph},
IEEE Trans. Inform. Theory 25 (1979), 1Ð7.

\bibitem{Marton}
K. Marton, {\em The capacity region of deterministic broadcast channels}, in Trans. Int. Symp.
Inform. Theory (Paris-Cachan, France, 1977).

\bibitem{Pinsker}
M. S. Pinsker, {\em Capacity of noiseless broadcast channels}, Probl. Pered. Inform., vol. 14, no.
2, pp. 28–34, Apr.–June 1978; translated in Probl. Inform. Transm., pp. 97–102, Apr.–June 1978.

\bibitem{S}
C. E. Shannon, {\em The zero error capacity of a noisy channel}, IRE Trans. Inform. Theory 2
(1956), 8Ð19.

\end{thebibliography}
\end{document}